\def\pcode{\texttt}
\def\code{\pcode}
\def\dali{{Dal\'{\i}}}
\definecolor{code_bg}{rgb}{0.98,0.98,0.98}
\definecolor{code_hl}{rgb}{0.92,0.92,0.92}
\newcommand{\removelatexerror}{\let\@latex@error\@gobble}
\let\old@printtopmatter\@printtopmatter
\def\@printtopmatter{%
  \global\setbox\mktitle@bx=\vbox{\noindent\box\mktitle@bx\par\bigskip}%
  \old@printtopmatter}
\def\Montage{Montage}
\def\lb{\linebreak[1]}
\newif\ifverbose \verbosetrue
\begin{document}

\title{
  Montage: A General System for Buffered Durably Linearizable Data
  Structures\thanks{%
This work was supported in part by NSF grants CCF-1422649, CCF-1717712, and
CNS-1900803, by a Google Faculty Research award, and by a US Department of
Energy Computational Science Graduate Fellowship (grant DE-SC0020347).}}

\author{Haosen Wen\thanks{The first two authors contributed equally to
    this work.}\,\strut}
\newcommand\CoAuthorMark{\footnotemark[\value{footnote}]\,\strut}
\author{Wentao Cai\protect\CoAuthorMark}
\author{Mingzhe Du}
\author{Louis Jenkins}
\author{Benjamin Valpey}
\author{Michael L. Scott}
\affil{%
\normalsize {University of Rochester \\
  Rochester, NY, USA \\
  \{hwen5,wcai6,mdu5,ljenkin4,bvalpey,scott\}@cs.rochester.edu}
}

\date{}
\maketitle

\thispagestyle{empty}

\begin{abstract}

\rule{0pt}{14pt}  
The recent emergence of fast, dense, nonvolatile main memory suggests
that certain long-lived data might remain in its natural
pointer-rich format across program runs and hardware reboots.
Operations on such data must be instrumented with explicit
write-back and fence instructions to ensure consistency in the
wake of a crash.  Techniques to minimize the cost of this
instrumentation are an active topic of research.

We present what we believe to be the first general-purpose approach to
building \emph{buffered durably linearizable} persistent data
structures, and a system, \Montage,
to support that approach.
\Montage{} is built on top of the Ralloc nonblocking persistent
allocator.  It employs a slow-ticking \emph{epoch clock}, and ensures
that no operation appears to span an epoch boundary.  It also arranges
to persist only that data minimally required to reconstruct the
structure after a crash.  If a crash occurs in epoch $e$, all work
performed in epochs $e$ and $e-1$ is lost, but work from prior
epochs is preserved.

We describe the implementation of \Montage, argue its correctness, and
report unprecedented throughput for persistent queues, sets/mappings,
and general graphs.

\end{abstract}

\section{Introduction}
\label{sec:introduction}

Despite enormous increases in capacity over the years,
the dichotomy between transient working memory (DRAM) and
persistent long-term storage (magnetic disks and flash) has been a
remarkably stable feature of computer organization.  Finally, however,
DRAM is approaching end-of-life.  Successor technologies will be denser
and much less power hungry.  They will also be nonvolatile.
\ifverbose
Already, today, one can buy an Intel server with multiple terabytes
of byte-addressable phase-change memory for less than \$20K\,USD.
\fi
While it is entirely possible to use nonvolatile memory (NVM) as a
plug-in replacement for
DRAM, nonvolatility raises the intriguing possibility of keeping
pointer-rich data ``in memory'' across program runs and even system
crashes, rather than serializing it to and from a file system or
back-end database.

Crashes cause problems, however.  For file systems and databases,
long-established logging techniques ensure that transitions
from one consistent state to another are failure atomic.  For data
structures accessed with load and store instructions, the cost
of such logging may be prohibitively high.  Moreover, the fact that
caches remain volatile and may write back their contents out of
program order means that data structure operations must typically issue
explicit write-back and fence instructions to guarantee post-crash consistency.

Past work has established \emph{durable linearizability} as the standard
correctness criterion for persistent data
structures~\cite{izraelevitz-disc-2016}.  This criterion builds on the
familiar notion of linearizability for concurrent (non-persistent) data
structures.  A data structure is said to be linearizable if whenever
threads perform operations concurrently, the effect is as if the
operations had been performed sequentially in some order that is
consistent with \emph{real time}
order (if operation $A$ returns before operation $B$ is called, then $A$ must
appear to happen before $B$) and with the semantics of the
abstraction represented by the structure.

A persistent data structure is said to be \emph{durably linearizable} if
(1) it is linearizable during crash-free operation, (2) each operation
persists (reaches a state that will survive a crash) between its call
and return, and (3) the order of persists matches the linearization
order.  By introducing program-wide coordination of persistence,
\emph{buffered} durably linearizable data structures may reduce
ongoing overhead by preserving only some consistent prefix of the
history prior to a crash.

Recent publications have described many individual
durably linearizable data structures and perhaps two dozen
general-purpose systems to provide failure atom\-i\-ci\-ty for outermost
critical sections or speculative transactions
(Sec.~\ref{sec:related}).
The need for operations to persist before returning is
a significant source of overhead in these systems.
To reduce
this overhead, Nawab et al.\ developed a buffered
durably linearizable 
hash table (\dali~\cite{nawab-disc-2017}) that ensures persistence
on a \emph{periodic}
\ifverbose (as opposed to incremental) \fi
basis.
More recently, Haria et al.'s MOD project~\cite{haria-asplos-2020} proposed that
programmers rely on history-preserving (``functional'') tree structures,
in which each update can be persisted by updating a single root pointer,
eliminating the need for logging.  Memaripour et
al.'s Pronto project~\cite{memaripour-asplos-2020} proposed that
concurrent objects log their \emph{high level} (abstract)
operations\ifverbose { (rather than low-level updates)}\fi, together with occasional checkpoints;
on a crash, they replay the portion of the log that follows the
most recent checkpoint.

Inspired in part by these previous projects, we present what we believe
to be the first general-purpose approach to \emph{buffered} durably
linearizable structures.  Our system, \Montage, employs a slow-running
\emph{epoch clock}, and ensures that no operation appears to span an
epoch boundary.  If a crash occurs in epoch $e$, \Montage{} recovers the
state of the abstraction from the end of epoch $e-2$ and rebuilds the
concrete structure.

\Montage{} also distinguishes between the \emph{abstract} state of the
concurrent object and its \emph{concrete} state.  It encourages the programmer
to maintain only the former in NVM, to reduce persistence
overhead.  A \Montage{} mapping, for example, would typically persist only a bag
of key-value pairs; the look-up structure (hash table, tree, skip list) lives
entirely in transient DRAM\@.  During recovery, \Montage{} cooperates with the
user program to rebuild the concrete state.

Our implementation of \Montage{} is built on top of
Ralloc~\cite{cai-ismm-2020}, a lock-free allocator for persistent
memory.  \Montage{} itself is also lock-free during normal operation,
though a stalled thread can arbitrarily delay progression of the
persistence frontier.
\ifverbose
We have designed an extension to \Montage{}
that avoids even this more limited form of blocking; given that
threads in real systems are seldom preempted for more than a fraction of
a second, however, the complexity of the extension seems unwarranted in
practice.
\fi

Performance experiments (Sec.~\ref{sec:results}) reveal that a \Montage{}
hash map can sustain well over 20\,M ops/s on a
read-heavy workload---7$\times$ as many as Dal\'{\i}, 17$\times$ as many
as Pronto, and within a factor of 3 of a transient DRAM table. This is
close to the best one could hope for: read latency for Intel Optane NVM
is about 3$\times$ that of DRAM~\cite{izraelevitz-optane-2019}.

\ifverbose
After reviewing related work in Section~\ref{sec:related}, we provide a
high-level description of \Montage{} in Section~\ref{sec:design}.
Correctness arguments appear in Section~\ref{sec:correctness};
implementation details appear in Section~\ref{sec:implementation}.
Section~\ref{sec:results} presents performance results, including an
exploration of the \Montage{} design space and a comparison to competing
systems.  Section~\ref{sec:conclusions} presents conclusions.
\fi

\section{Related Work}
\label{sec:related}

The past few
years have seen an explosion of work on persistent data structures, much
of it focused on B-trees indices for file systems and
databases~\cite{venkataraman-fast-2011,
  chen-vldb-2015, yang-fast-2015, oukid-sigmod-2016, kim-tos-2018,
  hwang-fast-2018, liu-icpp-2019, nam-tos-2020}.
Other work has targeted
RB trees~\cite{wang-tos-2018},
radix trees~\cite{lee-fast-2017},
hash maps~\cite{schwalb-imdm-2015, nawab-disc-2017, zuriel-oopsla-2019},
and queues~\cite{friedman-ppopp-2018}.
Several projects persist only parts of a data
structure, and rebuild the rest on recovery.  Zuriel et
al.~\cite{zuriel-oopsla-2019} argue that this approach can be used for
almost any implementation of a set or mapping.  Unfortunately, their
technique 
keeps a full copy of the
structure in DRAM, forfeiting the \ifverbose much larger \else high \fi
capacity of NVM\@.
\Montage{} eliminates this restriction; it also accommodates not only sets
and mappings, but any abstraction that comprises items and
relationships---effectively, anything that can be represented as a graph.

Several existing data structures are designed to linearize by using a
single compare-and-swap (CAS) instruction to replace a portion of the
structure~\cite{chen-vldb-2015, nawab-disc-2017, lee-fast-2017,
  nam-tos-2020}.  If the new portion is persisted before the CAS, and
the updated pointer is persisted immediately after the CAS, no separate
logging is required.  Mahapatra et al.~\cite{mahapatra-2019} and Haria
et al.~\cite{haria-asplos-2020} apply this observation
to a variety of ``functional'' data structures, building
sets, maps, stacks, queues, and vectors.  As an extension, a
sequence of single-CAS steps can be used to move a structure through
self-documenting intermediate stages~\cite{hwang-fast-2018,
  wang-tos-2018}.  In a similar vein, hardware transactional memory can
be used to modify a data structure and a log
concurrently~\cite{liu-icpp-2019}, or to update an entire cache line
without any chance that an intermediate version will be written back to
memory~\cite{kim-tos-2018}.

Izraelevitz et al.~\cite{izraelevitz-disc-2016}
provide a mechanical construction to convert any 
nonblocking concurrent structure into a correct persistent version.
David et al.~\cite{david-atc-2018} describe several techniques to
eliminate redundant writes-back and fences for such structures,
significantly improving performance.
 
\ifverbose Beyond individual data structures, several
\else Several
\fi groups have developed systems
to ensure the failure atomicity of lock-based
critical sections~\cite{hsu-eurosys-2017, chakrabarti-oopsla-2014,
  izraelevitz-asplos-2016, liu-micro-2018} or
speculative transactions~\cite{volos-asplos-2011, coburn-asplos-2011,
  charzistergiou-vldb-2015, giles-msst-2015, correia-spaa-2018,
  cohen-oopsla-2018, memaripour-iccd-2018, ramalhete-dsn-2019,
  gu-atc-2019, beadle-qstm-2020, pavlovic-podc-2018, intel-pmdk}.
Significantly, \emph{all} of these systems ensure that an operation has
persisted before permitting the calling thread to proceed---that is,
they adopt the strict version of durable linearizability.

The \dali\ hash map~\cite{nawab-disc-2017}
delays persistence, so the overhead of writes-back and fencing
can be amortized over many operations while still providing
\emph{buffered} durable linearizability.
The implementation relies on a flush-the-whole-cache
instruction, available only in privileged mode on the x86, and with
the side effect of unnecessarily evicting many useful lines.  Our
reimplementation of \dali\ (used in Sec.~\ref{sec:results}) tracks
to-be-written-back lines explicitly in software---as does \Montage.
\Montage{} then extends delayed persistence to arbitrary data structures.

Perhaps the closest prior work to \Montage, in motivation and generality,
is the Pronto system of Memaripour et al.~\cite{memaripour-asplos-2020}.
As noted in Section~\ref{sec:introduction}, Pronto logs \emph{high level}
(abstract) operations rather than low-level updates, and replays the log
after a crash.
\ifverbose
Periodic checkpoints allow it to bound the length of the
log, and thus recovery time.
\fi
Notably, Pronto still pays the cost of
persisting each operation before returning.

\ifverbose
\Montage's use of a global epoch clock has several precedents, including
implementations of software transactional memory~\cite{dice-disc-2006,
  riegel-disc-2006} and of safe memory reclamation for
transient~\cite{fraser-thesis-2004} and persistent~\cite{david-atc-2018}
data structures.
\fi

\section{\Montage{} Design}
\label{sec:design}

\Montage{} manages persistent \emph{payload} blocks on behalf of one or
more concurrent data structures.  A programmer who wishes to adapt a
structure to \Montage{} must identify the subset of the structure's data
that is needed, in quiescence, to capture the state of the abstraction.
A set, for example, needs to keep its items in payload blocks, but not
its lookup structure.  A mapping needs to keep key-value pairs.  A queue
needs to keep its items \emph{and} their order: it might label
payloads with consecutive integers from $i$ (the head) to $j$ (the tail).
A graph can keep a payload for each vertex (each with a unique name) and
a payload for each edge (each of which names two vertices).

A typical data structure maintains additional, transient data to speed
up operations.  A set or mapping might maintain a hash
table, tree, or skip list as an index into the pile of items or pairs.
A queue might maintain a linked list of pointers to items.  A graph
(depending on the nature of the application) might maintain a transient
object for each vertex, containing a pointer to a payload for the vertex
attributes,
a set of pointers to neighboring vertex objects, and (if edges have
large attributes) a set of pointers to edge payloads.
All of this transient data must be reconstructed after a crash.

Crucially, synchronization is always performed on transient data.  That
is, \emph{\Montage{} does not determine the linearization order for
operations on a data structure.}  Rather it ensures that the
persistence order for payloads is consistent with the linearization
order provided by the transient structure.
More specifically, it divides execution into \emph{epochs} in such a way
that every epoch boundary represents a consistent cut of the
happens-before relationship among operations; it then arranges, in the
wake of a crash, to recover all managed data structures to their state
as of some common epoch boundary.

\subsection{API}
\label{sec:api}

The \Montage{} API for C++ is shown in Figure~\ref{fig:api}.
An example of a lock-based hash table
is shown in Figure~\ref{fig:api-example-hashtable}.

\begin{figure}
\begin{minted}[fontsize=\footnotesize,bgcolor=code_bg]{c++}
namespace pds{
/* Payload class infrastructures */
// Base class of all Payload classes
class PBlk;
// Macro to generate get() and set() methods for
// fields of payload_type, where payload_type is a
// subtype of PBlk and fieldname has type type_name
GENERATE_FIELD(type_name, fieldname, payload_type);
// The macro expands to a protected 
// field `m_fieldname` and the following members:
  // get value with old-see-new alert enabled
  const type_name& get_fieldname();
  // get with old-see-new alert disabled
  const type_name& get_unsafe_fieldname();
  // set value of fieldname with new value
  // may return a copy of new payload
  payload_type* set_fieldname(type_name&);

/* Methods */
// Consistently begin operation in current epoch
// Optionally take PBlk's as arguments and
// mark them with current epoch
void BEGIN_OP(optional<PBlk*>, ...);
// End an operation
void END_OP();
// Begin an operation that will run through
// end of the scope, using the RAII idiom
BEGIN_OP_AUTOEND(optional<PBlk*>, ...);
// Create a payload block
payload_type* PNEW(payload_type, ...);
// Delete a payload after the end of next epoch
void PDELETE(PBlk*);
// Throw exception if the epoch has changed
CHECK_EPOCH();
// Mark a payload deleted but
// hold it until PRECLAIM is called
void PRETIRE(PBlk*);
// Delete a PRETIRE-d payload as soon as possible
void PRECLAIM(PBlk*);

/* Old-see-new alert */
struct OldSeeNewException : public std::exception;
};
\end{minted}
\caption{C++ API.}
\label{fig:api}
\end{figure}

\begin{figure}

\begin{minted}[xleftmargin=12pt,numbersep=4pt,linenos,
fontsize=\footnotesize,bgcolor=code_bg,
highlightlines={3,4,5,14,17,20,29},
highlightcolor=code_hl]{c++}
class HashTable{
// Payload class
class Payload : public PBlk{
  GENERATE_FIELD(K, key, Payload);
  GENERATE_FIELD(V, val, Payload);
}
// Transient index class
struct ListNode{
  // Transient-to-persistent pointer
  Payload* payload = nullptr;
  // Transient-to-transient pointers
  ListNode* next = nullptr;
  void set_val_wrapper(V& v){
    payload = payload->set_val(v);
  }
  ListNode(K& key, V& val){
    payload = PNEW(Payload, key, val);
  }
  ~ListNode(){
      PDELETE(payload);
  }
  // get() methods omitted
}
// Insert, or update if the key exists
optional<V> put(K key, V val, int tid){
  size_t idx=hash_fn(key)%idxSize;
  ListNode* new_node = new ListNode(key, val);
  std::lock_guard lk(buckets[idx].lock);
  BEGIN_OP_AUTOEND(new_node->payload);
  ListNode* curr = buckets[idx].head.next;
  ListNode* prev = &buckets[idx].head;
  while(curr){
    K& curr_key = curr->get_key();
    if (curr_key == key){
      optional<V&> ret = curr->get_val();
      curr->set_val_wrapper(val);
      delete new_node;
      return ret;
    } else if (curr_key > key){
      new_node->next = curr;
      prev->next = new_node;
      return {};
    } else {
      prev = curr;
      curr = curr->next;
    }
  } // while
  prev->next = new_node;
  return {};
}
};
\end{minted}
\captionsetup{justification=centering}
\vspace{-3ex}
\caption{Simple lock-based hash table example\\
(\Montage-related parts highlighted).}
\label{fig:api-example-hashtable}
\end{figure}

Any operation that creates or updates payloads must make itself visible
to \Montage{} by calling \code{BEGIN\_OP} or
\code{BEGIN\_\lb{}OP\_\lb{}AUTOEND}.
It indicates completion with \code{END\_OP}.
Read-only operations can skip these calls, though
they must still synchronize on the transient data structure.
Payloads are created and destroyed using \code{PNEW} and
\code{PDELETE}.
\code{PRETIRE} and \code{PRECLAIM} take the place of
\code{PDELETE} for nonblocking
memory management (Sec.~\ref{sec:nonblocking-datastructures}).
Existing payloads are accessed with
\code{get} and \code{set} methods, created by the
\code{GENERATE\_FIELD} macro; \code{get} returns a \code{const}
reference to the field; \code{set} updates the field and returns a
(possibly altered) pointer to the payload\ifverbose { as a whole}\fi.

To support the epoch system, \Montage{} labels all payloads with the
epoch in which they were created \emph{or most recently modified}.
An operation in epoch $e$ that wishes to modify an existing
payload can do so ``in place'' if the payload was created in $e$;
otherwise, \Montage{} creates a \emph{new} payload with which to replace it.
The \code{set} methods enforce this convention by returning a pointer to
a new or copied payload, as appropriate.

Because epochs are long (10--100\,ms),
``hot'' payloads are typically modified in place.  When
a new copy is created, however, an operation must re-write any
pointers to the payload found anywhere in the structure.
For this reason, it is important to minimize the number of pointers to a
given payload found in transient data.  It is even more important to
avoid long chains of pointers in persistent data: otherwise, a change to
payload $p$, at the end of a long chain, would require a change
to the penultimate payload $p^\prime$, which would in
turn require a change to its predecessor $p^{\prime\prime}$,
and so on.
\ifverbose
A similar observation is made by the designers of
MOD~\cite{haria-asplos-2020}.
\fi

Because calls to \code{get} are invisible to recovery, they can safely
be made outside the bounds of \code{BEGIN\_OP} and \code{END\_OP}
(subject 
to transient synchronization).
Calls to \code{PNEW} can also be made early, so long as the payloads
they return are passed as parameters to \code{BEGIN\_OP}, so they can
be properly labeled.

\subsection{Periodic Persistence}
\label{sec:epochs}

The key task of \Montage{} is to ensure that operations persist in an order
consistent with their linearization order.  Toward that end, the system
ensures that
\begin{enumerate}
\item\label{prop:same-epoch}
  all payloads created or modified by a given operation are labeled
  with the same epoch number;
\item\label{prop:persist-together}
  all payloads created or modified in a given epoch $e$ persist
  together, instantaneously, when the epoch clock ticks over from
  $e+1$ to $e+2$;
and
\item\label{prop:linearize-in-epoch}
  each update operation linearizes in the epoch in which it created
  payloads.
\end{enumerate}

\smallskip\noindent
Property \ref{prop:same-epoch} is ensured by the \code{set} and
\code{PNEW} methods, as described in Section~\ref{sec:api}.  Note that
an operation that begins in epoch $e$ can continue to create and modify
payloads in that epoch, even if the clock ticks over to $e+1$.

Property \ref{prop:persist-together} is enforced by \Montage's recovery
routines: if a crash occurs in epoch $e$, those routines discard all
payloads labeled $e$ or $e-1$, but keep everything that is older.  Note
that this convention requires that deletion be delayed.  If a payload
created or updated in epoch $b$ is passed to \code{PDELETE} in epoch
$e > b$, the \code{PDELETE} method creates an ``anti-payload'' labeled~$e$.
If a crash occurs before $e+2$, the anti-payload will be discarded
and the original payload retained.  If a crash occurs immediately after
the tick from $e+1$ to $e+2$, the anti-payload will be discovered during
recovery and both it and the original payload will be discarded.  If
execution proceeds without a crash, the original payload will be
reclaimed when the epoch advances from $e+2$ to $e+3$; the anti-payload
will be reclaimed when the epoch advances from $e+3$ to $e+4$.

Property \ref{prop:linearize-in-epoch} is the responsibility of the
transient data structure built on top of \Montage.  Lock-based
operations are easy: no conflicting operation can proceed until we
release our locks, and we can easily pretend that all updates happened
at the last call to \code{set} or \code{PNEW}.  For nonblocking
structures, a similar guarantee can be made if every operation
linearizes on a statically identified compare-and-swap (CAS)
instruction that also modifies an adjacent counter (as is often used to
avoid ABA anomalies).
One first reads some variable $x$, then double-checks the epoch
clock (the \code{CHECK\_EPOCH} method exists for this purpose), and only
then attempts a CAS on $x$\@.  If the CAS succeeds, it can be said to
have occurred at the time of the \code{CHECK\_EPOCH} call.
Note that this strategy generally requires read-only operations on the
same structure to be modified by replacing their linearizing read with a
CAS that updates the adjacent count: otherwise a read that occurs
immediately after an epoch change might observe an update from the
previous epoch as not yet having occurred.  For cases in which this
modification is undesirable (e.g., because reads vastly outnumber
updates), we use a variant of the double-compare-single-swap (DCSS)
software primitive of Harris et al.~\cite{harris-disc-2002}) to update
a location while simultaneously verifying the current epoch number.  A
compatible read primitive performs no store instructions (and thus
induces no cache evictions) so long as no DCSS is currently in progress
in another thread (if one is, the read helps the DCSS complete).

As an assist to programmers in ensuring
property~\ref{prop:linearize-in-epoch}, \Montage{} raises an exception
whenever an operation running in epoch $e$ reads a payload
created in some epoch $e' > e$.
In most cases, programmers can ensure
that this exception will never arise.
In other cases, the operation may respond to the exception by rolling
back what has done so far and starting over in the newer epoch.
In special cases, an operation can ignore the exception
or use \code{get\_unsafe} methods to avoid generating it in the first
place (the new data might, for example, be used only for semantically
neutral performance enhancement).

\smallskip
In support of these properties,
the epoch-advancing mechanism at
the end of epoch $e$
\begin{itemize}
\item
  waits until no operation is active in epoch $e-1$;
\item
  reclaims all payloads deleted in epoch $e-2$ and all anti-payloads
  created in epoch $e-3$;
\item
  explicitly writes back all payloads created or modified in epoch $e-1$;
\item
  waits for the writes-back to complete;
and
\item
  updates and writes back the epoch clock.
\end{itemize}

\noindent
Further details appear in Section~\ref{sec:implementation}.
\subsection{Nonblocking Data Structures}
\label{sec:nonblocking-datastructures}

As described in Section~\ref{sec:epochs}, \Montage{} is compatible with
nonblocking operations that employ special CAS or read primitives to
ensure that linearization occurs in the epoch in which any payloads were
created or modified.

In the general case, a structure that uses the
\code{Old\-See\-New\-Excep\-tion} to keep its linearization order
consistent with epoch order
may find that the resulting restarts make it lock-free or
obstruction-free, rather than wait-free.
Still, nothing in \Montage{} precludes lock freedom.
\ifverbose
At the same time, while \Montage{} never indefinitely delays an operation
unless some other operation has made progress,
a stalled operation can indefinitely delay the progress of persistence.

\ifverbose
We have designed (but not yet implemented) a version of \Montage{} that
allows the epoch clock (and thus the persistence frontier) to
advance in a nonblocking fashion.
\Montage{} already maintains, internally, an array that indicates, for each
thread, whether that thread is actively executing a data structure
operation, and, if so, in which epoch.  (It is by scanning this array
that the epoch-advancing mechanism knows whether it can proceed.)  The
key to nonblocking persistence is to augment this array with a ``serial
number'' for the current operation, and an indication of whether that
operation is active, committed, or aborted, much as nonblocking
object-based software transactional memory systems track the status of
transactions~\cite{herlihy-podc-2003, marathe-disc-2005,
fraser-tocs-2007, marathe-transact-2006, tabba-spaa-2009}.
Each payload is then labeled not only with an epoch number, but with the
thread id and serial number of its creator.  To advance the epoch, we
scan the array and abort any operation that stands in our way by
CAS-ing its status from ``active'' to ``aborted.''  Each data structure
operation, for its part, ends by CAS-ing its status from ``active'' to
``committed'' and, if the epoch has advanced, performing a write-back
and fencing that status.  Recovery routines, in the wake of a crash,
discard any payloads---even in old epochs---whose creating operations
were aborted.
\fi

\ifverbose 

Nonblocking structures may need to use \emph{safe memory reclamation}
(SMR) techniques, such as epoch-based
reclamation~\cite{fraser-thesis-2004} or hazard
pointers~\cite{michael-tpds-2004}, to avoid creating dangling pointers
when deleting blocks to which other threads may still hold references.
In such structures, the actual reclamation of a block may be occur
outside the scope---or even the
epoch---of its deleting operation.  Simply
delaying the \code{PDELETE} of a payload to the
reclamation of some corresponding transient block does not suffice,
because transient \emph{limbo lists} belonging to SMR are lost after a
crash.
\Montage{} provides \code{PRETIRE} and \code{PRECLAIM} operations to
handle this situation.

\else
Most nonblocking structures will use \code{PRETIRE} and
\code{PRECLAIM} instead of \code{PDELETE}.  Similar routines (e.g.,
for epoch-based reclamation~\cite{fraser-thesis-2004} or hazard
pointers~\cite{michael-tpds-2004}) are assumed to be used for
transient data, but are not provided by \Montage.
\fi

Typically, \code{PRETIRE} is called when a
payload is ``detached'' from the shared structure, and \code{PRECLAIM} is called
upon the destruction of its transient parent, when no references to the
payload remain outside the memory
manager.  After a crash in epoch $e$, a payload \code{PRETIRE}-d in or
before $e-2$ but not \code{PRECLAIM}-ed can safely be reclaimed.
\fi

\section{Correctness}
\label{sec:correctness}
\newtheorem{theorem}{Theorem}
\newtheorem{lemma}{Lemma}

We argue that \Montage{} (1) preserves the linearizability of a structure
implemented on top of it, (2) adds buffered durable linearizability, and
(3) preserves lock freedom.

Each concurrent data structure serves to implement some abstract data type.
The semantics of such a type are defined in terms of
\emph{legal histories}---sequences of operations, with their arguments
and return values.  The implementation is correct if it is
\emph{linearizable}, meaning that every concurrent history (with
overlapping calls and returns from different threads) is equivalent to
(has the same operations as) some sequential history that is consistent
with real-time order (if $a$ returns before $b$ is called in the
concurrent history, then $a$ precedes $b$ in the sequential history) and
that represents a valid operation sequence for the data type.

We can define the abstract \emph{state} of a data type, after a finite
sequence of operations, as the set of sequences that are permitted to
extend that sequence according to the type's semantics.  Suppose, then,
that data structure $S$ is a correct implementation of data type $T$,
and that $s$ is a quiescent concrete state of $S$ (the bits in memory at
some point when no operations are active).  We can define the
\emph{meaning} of that state, ${\cal M}(s)$, as the state of $T$ after
the sequence of abstract operations corresponding to (a linearization
of) the operations performed so far on $S$.

We assume that the programmer using \Montage{} obeys the following
\emph{well-formedness} constraints:
\begin{enumerate}
\item\label{constraint:naked-lin}
    Each data structure $S$, implemented on top of \Montage, is
    linearizable when \Montage{} itself is disabled and crashes do not occur.
    More specifically, assume that
        (a) \code{PNEW} and \code{PDELETE} are
        implemented as ordinary \code{new} and \code{delete};
        (b) \code{get} and \code{set} are ordinary accessor methods, and
        \code{set} never copies a payload;
        (c) \code{BEGIN\_OP} and \code{END\_OP} are no-ops; and
        (d) the \code{OldSeeNewException} never arises.
    Under these circumstances, the structure is linearizable.
\item\label{constraint:drf-payloads}
    Any synchronization required for linearizability is performed solely on
    transient data---accesses to payloads never participate in a data
    \emph{or} synchronization race.
\item\label{constraint:epoch-bracketing}
    All accesses to payloads are made through \code{get} and \code{set}.
    Each operation that modifies the data structure (a) calls
    \code{BEGIN\_OP} before \code{set} (passing as arguments any
    previously created payloads), (b) calls \code{END\_OP} after
    completing all its \code{set}s, and (c) ensures that between its
    last call to \code{set} or \code{CHECK\_EPOCH} and its linearization
    point, no conflicting operation can linearize.
\item\label{constraint:repatching}
    Whenever \code{set} returns a pointer to a payload different than the
    one on which it was called, the calling operation replaces every pointer
    to the old payload in the structure with a pointer to the new payload.
\item\label{constraint:payload-state}
    There exists a mapping $\cal Q$ from sets of payloads to states of $T$
    such that whenever $S$ is quiescent, ${\cal M}(s) = {\cal Q}(p)$, where
    $s$ is the concrete state of $S$ and $p$ is the current set of payloads.
\item\label{constraint:recovery}
    The recovery routine for $S$, given a set of payloads $r$,
    constructs a concrete state $t$ such that ${\cal M}(t) = {\cal Q}(r)$.
\end{enumerate}

\subsection{Linearizability}

\begin{lemma}\label{lemma:montage_on}
    A well-formed, linearizable concurrent data structure, implemented
    on top of \Montage, remains well-formed and linearizable when \Montage{}
    is enabled.
\end{lemma}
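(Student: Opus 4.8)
The plan is to prove the lemma by a \emph{simulation} (``erasure'') argument. From an arbitrary crash-free concurrent execution $E$ of $S$ running on an enabled \Montage, I would exhibit a corresponding execution $E'$ of $S$ with \Montage{} disabled---in the sense of constraint~\ref{constraint:naked-lin}---that contains the same client operations, in the same real-time order, with the same return values. Linearizability of $E$ then reduces to linearizability of $E'$, which constraint~\ref{constraint:naked-lin} already grants.

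To obtain $E'$ from $E$: first, assume without loss of generality that every \code{OldSeeNewException} raised in $E$ is either avoided (the common case) or caught by the issuing operation, which rolls back all it has done and restarts cleanly; a failed attempt performs no \code{set}, publishes no payload, and (by well-formedness) leaves no trace visible to any other thread, so these attempts may be deleted, leaving the surviving run of each operation. (The ``ignore the exception'' and \code{get\_unsafe} cases are subsumed, since there the programmer has guaranteed the offending read is abstractly neutral.) Now erase all epoch-clock accesses, all \code{CHECK\_EPOCH} calls, and the bodies of \code{BEGIN\_OP}/\code{END\_OP}; replace each \code{PNEW}/\code{PDELETE} with \code{new}/\code{delete} and drop all anti-payloads; and for each \code{set} that in $E$ allocates a fresh copy $q'$ of a payload $q$, have $E'$ update $q$ in place---so that the pointer rewrites constraint~\ref{constraint:repatching} forces in $E$ (redirecting references from $q$ to $q'$) become, in $E'$, writes of a pointer value equal to the one already stored, i.e.\ no-ops.

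Next I would show $E'$ is a legal crash-free execution of the disabled structure, via two claims. (i) The synchronization structure is identical in $E$ and $E'$: by constraint~\ref{constraint:drf-payloads} no data or synchronization race ever touches a payload, so the extra payload allocations, copies, writes-back, and reclamations that \Montage{} performs affect neither the happens-before order nor the value returned by any synchronizing access. (ii) Every \code{get} in $E$ returns the same logical value as the corresponding in-place field in $E'$. Claim (ii) is proved by induction over $E$, with the invariant that at every point the field reachable through the structure's pointers in $E$ (once any pending repatch completes) agrees with the in-place field in $E'$; the induction step uses constraint~\ref{constraint:repatching} (a copy is immediately followed by \emph{complete} repatching, so no stale pointer to $q$ survives in the shared structure) together with constraints~\ref{constraint:drf-payloads} and~\ref{constraint:epoch-bracketing}(c) (between the copying \code{set} and the operation's linearization point no conflicting operation linearizes, so none catches the structure mid-repatch). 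Given (i) and (ii), every thread follows in $E$ exactly the control-flow path it follows in $E'$, so $E'$ is a well-formed run of the disabled structure and, by constraint~\ref{constraint:naked-lin}, admits a linearization $L$ consistent with real-time order and with the semantics of $T$.

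Finally I would lift $L$ back to $E$: it orders the same operations; it respects $E$'s real-time order because erasure alters no client call or return, and each operation's linearization point (constraint~\ref{constraint:epoch-bracketing}(c)) lies within its own interval; and it is $T$-valid for the return values of $E$ because, by claim (ii), those coincide with the return values of $E'$. Preservation of well-formedness is then immediate: constraints~\ref{constraint:drf-payloads}--\ref{constraint:recovery} concern only the programmer's structure and recovery routine and are independent of whether \Montage{} is enabled, and constraint~\ref{constraint:naked-lin} is a standing hypothesis. (Enabling \Montage{} may \emph{delay} an operation---for instance in \code{BEGIN\_OP} while an epoch advances---but delay bears on progress, treated separately, not on linearizability.) I expect claim (ii) to be the main obstacle: establishing rigorously that payload copy-plus-repatch is \emph{semantically transparent}---that no thread can distinguish ``updated in place'' from ``replaced by an identical-then-updated copy with every pointer redirected''---is exactly where constraints~\ref{constraint:drf-payloads}, \ref{constraint:epoch-bracketing}(c), and~\ref{constraint:repatching} must be combined, and where one must dispose of the apparently dangerous case of a thread holding a stale reference to a replaced payload $q$. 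The resolution is that, by constraint~\ref{constraint:drf-payloads}, any conflicting access to $q$'s fields is ordered by synchronization against the replacing \code{set}, so a thread still holding a reference to $q$ must have acquired it before the replacer's linearization point; reading $q$'s (unchanged) old contents is therefore consistent with that thread linearizing before the replacer, and otherwise the thread will detect the change through the transient structure and restart---while \Montage's delayed reclamation keeps $q$ live as long as any thread can still reach it.
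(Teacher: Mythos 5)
Your proposal takes essentially the same route as the paper's proof, which is a (much terser) version of the same erasure argument: the only effects of enabling \Montage{} visible to the structure during crash-free execution are payload cloning (neutralized by Constraint~\ref{constraint:repatching} plus the race-freedom of payload accesses from Constraint~\ref{constraint:drf-payloads}), the \code{OldSeeNewException} (harmless for any operation already satisfying Constraint~\ref{constraint:epoch-bracketing}), and the presence of extra blocks in the payload set. Your claim~(ii) and the discussion of stale references is a careful elaboration of what the paper compresses into ``since access to payloads is race-free, this re-attachment is safe,'' and on the linearizability half your argument is more rigorous than the paper's.

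One point where you are actually weaker than the paper: you dismiss preservation of \emph{well-formedness} as immediate because Constraints~\ref{constraint:drf-payloads}--\ref{constraint:recovery} are ``independent of whether \Montage{} is enabled.'' Constraint~\ref{constraint:payload-state} is not independent: it quantifies over \emph{the current set of payloads}, and enabling \Montage{} changes that set---it now contains superseded old versions of cloned payloads and anti-payloads, on which the programmer's original map $\cal Q$ was never required to behave correctly. The paper handles this explicitly by constructing a modified map ${\cal Q}^\prime$ that ignores old versions of cloned payloads and any payload for which an anti-payload exists, and arguing that ${\cal Q}^\prime$ witnesses Constraint~\ref{constraint:payload-state} (and feeds Constraint~\ref{constraint:recovery}) in the enabled setting. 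Your erasure construction (``drop all anti-payloads,'' identify each clone with its original) contains all the ingredients to define ${\cal Q}^\prime$, so this is a small repair rather than a flaw in the approach, but as written the well-formedness half of the lemma is not actually established.
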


\begin{proof}
    Constraint~\ref{constraint:repatching} ensures that any payload
    cloned by \Montage{} is reattached to the structure wherever the old
    payload appeared.  Since access to payloads is race-free
    (Constraint~\ref{constraint:drf-payloads}), this re-attachment
    is safe.
    Throws of the \code{Old\-See\-New\-Excep\-tion} will be harmless:
    they exist simply to simplify compliance with
    Constraint~\ref{constraint:epoch-bracketing}; any operation that
    already satisfies that constraint can safely ignore the exception.
    Finally, given the mapping $\cal Q$ from payloads to abstract state
    (Constraint~\ref{constraint:payload-state}), we can easily create a
    ${\cal Q}^\prime$ that ignores both the old versions of cloned
    payloads and any payloads for which an anti-payload exists.
    These are the only effects of enabling \Montage{} that are visible to
    the structure during crash-free execution.
\end{proof}

\begin{theorem}
    A \Montage{} data structure $S$ remains linearizable when epoch advancing
    operations are added to its history.
\end{theorem}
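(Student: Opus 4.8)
The plan is to reduce this theorem to Lemma~\ref{lemma:montage_on} by arguing that epoch-advancing operations are \emph{transparent} to the operations of the abstract type $T$: they change the logical content of no payload, they touch no transient location on which a data-structure operation synchronizes, and they free only storage that is already semantically dead. Since epoch-advancing operations are not operations of $T$, they carry no abstract semantics and need not appear in any legal history of $T$; establishing transparency therefore suffices.

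First I would make precise what a single epoch-advance at the end of epoch $e$ does, using the five steps listed at the end of Section~\ref{sec:epochs}: it reads the per-thread activity array and waits; it reclaims payloads deleted in epoch $e-2$ and anti-payloads created in epoch $e-3$; it issues write-backs for all payloads created or modified in epoch $e-1$; it fences; and it updates and writes back the epoch clock. Only the last two steps perform stores observable outside \Montage's own bookkeeping. A write-back does not alter the bits it writes back, so it is invisible at the logical level. The epoch clock is read by user code only through \code{CHECK\_EPOCH} and the old-see-new check; by Constraint~\ref{constraint:epoch-bracketing} and the argument already given for Lemma~\ref{lemma:montage_on}, the only effect such a read can have on a well-formed operation is to make it restart or throw \code{OldSeeNewException}, which it is prepared to handle and which does not change the sequential history to which it is equivalent. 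Hence no data-structure operation's return value depends on whether, or when, epoch-advancing operations run concurrently with it---modulo the delicate point below.

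Given that, I would take an arbitrary concurrent history $H$ of $S$ with \Montage{} enabled and epoch-advancing operations interleaved arbitrarily, and let $H'$ be $H$ with the epoch-advancing operations deleted. $H'$ is a history of the form covered by Lemma~\ref{lemma:montage_on}, hence linearizable; let $H_{\mathrm{seq}}$ be a witnessing sequential history over $T$-operations. Because epoch-advancing operations are not $T$-operations, the restriction of $H$ to $T$-operations is exactly $H'$, so $H_{\mathrm{seq}}$ has the right operations and is a legal $T$-history; and since deleting non-$T$ events from $H$ can neither create nor destroy a real-time edge between two surviving $T$-operations, $H_{\mathrm{seq}}$ also respects the real-time order of $H$. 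Combined with transparency (every $T$-operation in $H$ returns what it returns in $H'$), $H_{\mathrm{seq}}$ linearizes $H$.

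The main obstacle is the race-freedom half of the transparency claim, specifically the case where an epoch boundary falls inside an ongoing operation. The worry is that the epoch-advance's write-back of epoch-$(e-1)$ payloads could overlap an operation still mutating one of those payloads, or that reclamation could free a payload still reachable from the transient structure. For the first, I would invoke the ordering built into the protocol: the epoch-advance writes back epoch-$(e-1)$ payloads only after waiting until no operation is active in epoch $e-1$, and by Property~\ref{prop:same-epoch} an operation labels payloads with the epoch in which it began, so any operation that could still be writing an epoch-$(e-1)$ payload is by definition still ``active in epoch $e-1$'' and is waited for; and, being logically value-preserving, the write-back cannot change what any concurrent reader observes, while by Constraint~\ref{constraint:drf-payloads} payload accesses are in any case race-free. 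For the second, a payload deleted or \code{PRETIRE}-d in epoch $e-2$ has, by Constraint~\ref{constraint:repatching} together with the completion of its deleting operation before the end of epoch $e$, no surviving pointer in $S$'s concrete state, and the old version of a payload cloned by \code{set} is likewise unreachable once the cloning operation finishes its repatching; so each reclamation is invisible to every $T$-operation. With these facts in place the transparency claim holds and the theorem follows.
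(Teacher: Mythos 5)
There is a genuine gap: you have proved a different, weaker statement than the one the theorem asserts. You explicitly decide that the epoch-advancing operations ``need not appear in any legal history of $T$'' and then delete them, showing only that the remaining $T$-operations are linearizable. But that is essentially Lemma~\ref{lemma:montage_on} restated (the lemma already covers execution with \Montage{} enabled, which includes concurrent epoch advances). The actual content of this theorem is that the epoch-advance events $a_e$ can themselves be \emph{inserted into} the linearization order consistently with real time --- equivalently, that each epoch boundary is a consistent cut of the linearization: every update operation whose payloads are labeled $e$ can be placed between $a_e$ and $a_{e+1}$. This is not a cosmetic distinction; the subsequent buffered-durable-linearizability theorem reasons about ``the set of payloads created by operations that linearized prior to $a_{e-1}$,'' which is meaningless unless $a_{e-1}$ has a position in the linearization order. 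Your transparency argument gives no way to assign it one.

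The missing ingredients are exactly the two facts the paper uses: (a) Constraint~\ref{constraint:epoch-bracketing}(c) --- between an update operation's last \code{set}/\code{CHECK\_EPOCH} and its linearization point no conflicting operation can linearize --- forces each update to linearize in the epoch whose number labels its payloads, so the $a_e$ events can be interleaved with the updates without violating real-time order; and (b) read-only operations have no forward or anti-dependences on the epoch clock, so they impose no circular constraints on where the $a_e$ are placed. Your careful discussion of write-back and reclamation races is reasonable but peripheral to what the theorem is actually claiming; if you reorient the proof around placing the $a_e$ events rather than erasing them, those observations become supporting detail rather than the main argument.
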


\begin{proof}
    Let $a_e$ denote the operation that advances the epoch from $e-1$ to $e$.
    Consider a linearization order for $S$ itself, as provided by
    Lemma~\ref{lemma:montage_on}.
    Constraint~\ref{constraint:epoch-bracketing} ensures that the
    linearization point of any update operation in this order occurs
    between events $a_{e}$ and $a_{e+1}$, making it easy to place these
    events into the linearization order.
    A read-only operation, moreover, has no forward or anti-dependences
    on the epoch clock, and so cannot participate in any circular
    dependence with respect to the epoch advancing events.
\end{proof}

\subsection{Buffered Durable Linearizability}

\begin{theorem}
    A well formed, linearizable concurrent data structure, running on
    \Montage, is buffered durably linearizable.
\end{theorem}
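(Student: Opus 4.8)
The plan is to reduce buffered durable linearizability to two ingredients already (nearly) in hand: (i) the fact that, by the previous theorem, the history with epoch-advancing events inserted is linearizable, and (ii) the claim that the post-crash recovered state is exactly the abstract state at some epoch boundary of that linearized history. Recall the definition: a structure is buffered durably linearizable if, after a crash, the recovered abstract state equals ${\cal M}(s)$ for some concrete quiescent state $s$ reachable along a linearization of the crash-free prefix of the history, and subsequent operations continue to observe a linearizable history extending that prefix. So the proof splits into a ``what survives'' argument and a ``what happens next'' argument.

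First I would pin down the surviving payload set. Suppose the crash occurs in epoch $e$. By Property~\ref{prop:persist-together}, every payload created or modified in epochs $\le e-2$ has been written back and fenced before the clock ticked to $e-1$, so it is present in NVM; conversely, payloads labeled $e-1$ or $e$ may or may not be present, and anti-payloads obey the delayed-deletion discipline of Section~\ref{sec:epochs}. \Montage's recovery routine, as described there, therefore discards everything labeled $e-1$ or $e$ (and every payload whose anti-payload survives), leaving precisely the set $r$ of payloads that were live at the $a_{e-1}$ boundary --- i.e.\ the payloads created-or-modified at or before epoch $e-2$ and not deleted by an anti-payload of epoch $\le e-2$. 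Here I would lean on Constraint~\ref{constraint:payload-state}: because the structure $S$ was quiescent at no particular instant, I instead argue that $r = {\cal Q}^{-1}$-preimage of the abstract state after exactly those update operations that linearized in epochs $\le e-2$. By the previous theorem this is a genuine prefix of a linearization of $S$'s history (every update linearizes between two consecutive $a$ events, so ``all updates through epoch $e-2$'' is a downward-closed set in the linearization order), and by Constraint~\ref{constraint:payload-state} the corresponding abstract state is ${\cal Q}(r)$.

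Second I would invoke the recovery constraint. By Constraint~\ref{constraint:recovery}, feeding $r$ to $S$'s recovery routine yields a concrete state $t$ with ${\cal M}(t) = {\cal Q}(r)$, which by the previous paragraph equals ${\cal M}(s)$ for the quiescent concrete state $s$ that a crash-free run would have exhibited at the $a_{e-1}$ boundary. That establishes clause (2) of durable linearizability in its buffered form: work from epochs $e$ and $e-1$ is lost, everything older is preserved, and the lost suffix is exactly a suffix of one legal linearization. For clause (3) --- persists ordered consistently with linearization --- I would note that Property~\ref{prop:persist-together} makes persistence happen atomically per epoch and in epoch order, while Constraint~\ref{constraint:epoch-bracketing} (via the previous theorem) forces each update's linearization point into its own epoch; hence the persistence order refines the linearization order. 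Finally, operations issued after recovery see concrete state $t$ with ${\cal M}(t)$ a valid state of $T$, so by Lemma~\ref{lemma:montage_on} they extend the surviving prefix linearizably.

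The main obstacle, I expect, is the ``what survives'' bookkeeping: carefully matching the set of payloads physically present-and-retained in NVM after a crash in epoch $e$ to the abstract cut at $a_{e-1}$, given the three-epoch-deep interplay of payloads, anti-payloads, write-backs, and reclamation described in Section~\ref{sec:epochs}. In particular one must check that no payload labeled $\le e-2$ can have been reclaimed prematurely (reclamation of epoch-$b$ deletions happens at the tick to $b+3$, so anything still needed at the $e-1$ boundary is safe), and that no anti-payload straddles the boundary in a way that leaves $r$ inconsistent --- both of which follow from the delayed-deletion invariant but deserve an explicit lemma. The rest is essentially assembling Constraints~\ref{constraint:payload-state} and~\ref{constraint:recovery} with the preceding theorem.
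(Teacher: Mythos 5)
Your proposal is correct and follows essentially the same route as the paper's proof: use the preceding theorem to place each update's linearization point within its epoch, identify the surviving payload set $r$ with the abstract cut at $a_{e-1}$ via Constraint~\ref{constraint:payload-state}, and apply Constraint~\ref{constraint:recovery} to conclude the recovered state reflects that consistent prefix. The extra bookkeeping you flag (anti-payloads, reclamation timing, the persistence-order clause) is elaboration the paper compresses into ``preserving those in existence as of $a_{e-1}$''; the only detail you omit is the trivial $e \le 2$ case.
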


\begin{proof}
We need to show that in any execution $H$ containing a crash $c$, the
state of the data structure after recovery reflects some consistent
prefix of the linearized pre-crash history.  Suppose that $c$ occurs
in epoch $e$ of $H$.  If $e \le 2$,
recovery will restore the initial state of the system, which reflects
the null prefix of execution.  If $e > 2$, \Montage{} will discard all
payloads created in epochs $e$ and $e-1$, preserving those in existence
as of $a_{e-1}$, and will pass these to the structure's recovery
routine.  This routine, by Constraint~\ref{constraint:recovery}, will
construct a new concrete state $t$ such that ${\cal M}(t) = {\cal
  Q}(r)$, where $r$ is the set of payloads it was given.
But $r$ is precisely the set of payloads created by operations that
linearized prior to $a_{e-1}$.  If execution had reached quiescence
immediately after those operations,
Constraint~\ref{constraint:payload-state} implies that the concrete
state $s$ of $S$ would have been such that ${\cal M}(s) = {\cal Q}(r)$.
Thus the post-recovery state $t$ reflects a consistent prefix of the
linearized pre-crash history.
\end{proof}

\subsection{Liveness}

\begin{theorem}
    \Montage{} is lock free during crash-free execution.
\end{theorem}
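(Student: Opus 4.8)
The plan is to show that turning \Montage{} on adds to each data-structure operation only steps that are themselves lock free---indeed, mostly bounded and contention free---so that lock freedom of the underlying transient structure and of the Ralloc allocator carries over unchanged. I would take as hypotheses that Ralloc is lock free and that the structure $S$ is lock free when \Montage{} is disabled (Constraint~\ref{constraint:naked-lin} only asks for linearizability, so this strengthening must be stated explicitly). It then suffices to audit every point at which an operation touches \Montage{} and check that none of them can block or livelock, and separately to check that the epoch-advancing mechanism never stands in an operation's way.

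First I would walk through the per-operation hooks. \code{BEGIN\_OP} and \code{END\_OP} write only to the calling thread's slot in \Montage's internal active-epoch array; that slot is owned by the caller, so these steps are contention free and finish in a bounded number of instructions. If \code{BEGIN\_OP} must re-read the epoch clock and retry after a tick, such a retry happens only because another thread advanced the epoch, which (by the argument below) entails that other operations made progress---so the retry cannot recur without system-wide progress. \code{CHECK\_EPOCH} is a single read. \code{PNEW}, \code{PDELETE} (a mere \code{PNEW} of an anti-payload), \code{PRETIRE}, \code{PRECLAIM}, and the eventual reclamation of retired blocks are all delegated to Ralloc and hence lock free. A \code{set} that must clone invokes \code{PNEW} and returns a new pointer; the ensuing re-patching of pointers (Constraint~\ref{constraint:repatching}) is ordinary transient work done with the structure's own synchronization---lock free by hypothesis---and by Constraint~\ref{constraint:drf-payloads} no two operations ever race on the same payload, so cloning adds no unbounded contention. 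The DCSS-based read primitive used for read-heavy nonblocking structures relies on standard helping (a stalled DCSS is finished by whoever encounters it) and is therefore lock free. Finally, a restart triggered by \code{Old\-See\-New\-Excep\-tion} occurs only when a concurrent operation has advanced the structure into a newer epoch, i.e., has itself made progress, so such restarts cannot produce livelock.

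It remains to show that the epoch-advancing mechanism never delays an operation. That mechanism scans the active-epoch array, waits until no operation is active in epoch $e-1$, issues writes-back and a fence for the payloads of epoch $e-1$, reclaims the appropriate old payloads and anti-payloads through Ralloc, and then CAS-es and writes back the epoch clock. Crucially, no operation ever waits on the clock to advance: an operation begun in epoch $e$ keeps running in epoch $e$ even after the clock ticks to $e+1$ (Sec.~\ref{sec:epochs}), and \code{CHECK\_EPOCH} and the DCSS primitive only \emph{read} the clock. The dependence is thus one-directional---a stalled operation can delay the advancer, and hence the persistence frontier, but the advancer can never delay an operation. Combined with the audit of the per-operation hooks, this means every step \Montage{} inserts is bounded, or a lock-free Ralloc call, or resolved by helping, and every \code{Old\-See\-New} restart is provoked by another operation's progress; hence no crash-free execution can run forever while completing only finitely many operations, which is exactly lock freedom.

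I expect the main obstacle to be the delicate, exhaustive part of the second paragraph: convincing oneself that \emph{every} interaction point---in particular the \code{BEGIN\_OP} announce protocol and the clone path inside \code{set}---contains no spin that can recur without global progress, and that the helping loop in the DCSS read primitive terminates. Once it is established that operations never wait on \Montage{}, and in particular never wait on the epoch clock, the theorem follows immediately from the lock freedom of $S$ and of Ralloc; the accompanying caveat (a stalled operation can arbitrarily delay the persistence frontier) should be stated, since it is consistent with---but not precluded by---the theorem.
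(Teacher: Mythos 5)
Your proposal follows essentially the same route as the paper's proof: identify the retry loop in \code{BEGIN\_OP} and the restarts caused by \code{Old\-See\-New\-Excep\-tion} as the only places \Montage{} can delay an operation, observe that both are triggered only by an epoch advance, and conclude that any delay is witnessed by system-wide progress. Your audit of the remaining hooks (Ralloc calls, the clone path in \code{set}, the DCSS helping, and the one-directional dependence between operations and the epoch advancer) is more thorough than the paper's one-line claim that ``the only loop in \Montage{} lies within \code{BEGIN\_OP},'' and your explicit appeal to lock freedom of Ralloc and of the underlying transient structure makes hypotheses visible that the paper leaves implicit.

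There is, however, one link you assert but never actually establish: that an epoch advance ``entails that other operations made progress.'' You defer this to ``the argument below,'' but that argument only shows the advancer waits until no operation is still active in epoch $e-1$ and never blocks operations; waiting for quiescence of the old epoch does not by itself mean any data-structure operation \emph{completed}. In principle an aggressive advancer could tick repeatedly and starve a thread spinning in \code{BEGIN\_OP} without any operation finishing. The paper closes exactly this hole by adding an explicit assumption: the epoch-advancing operation always waits until at least one operation has completed in the old epoch, so that each retry of \code{BEGIN\_OP} (and each \code{Old\-See\-New} restart) is witnessed by a completed operation. You should state that assumption (or an equivalent condition on the advancer) rather than treating the implication as following from the mechanism you describe.
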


\begin{proof}
    The only loop in \Montage{} lies within \code{BEGIN\_OP}, where an
    update operation seeks to read the epoch clock and announce itself
    as active in that epoch, atomically.
    Each retry of the loop implies that the epoch has advanced.  If we
    assume that the epoch advancing operation (which need not be
    nonblocking) always waits until at least one operation has completed
    in the old epoch, then an operation can be delayed in
    \code{BEGIN\_OP} only if some other operation has completed.
    The \code{OldSeeNewException}, similarly, will arise (and
    cause some operations to start over) only if the epoch has advanced.
\end{proof}

\section{Implementation Details}
\label{sec:implementation}
Figure~\ref{fig:montage-pseudocode} shows pseudocode for \Montage's
core functionality.
Transient data structures include an ``operation tracker'' that
indicates, for each thread in the system, the epoch of its active
operation (if any), and lists of payloads to be persisted and freed
at future epoch boundaries.  The latter are logically indexed by epoch,
but only the most recent 2 or 3 are needed.  For simplicity, \Montage{}
maintains four sets of lists, and indexes into them using the 2
low-order bits of the epoch number.
For convenience, each thread also caches the epoch of its currently
active operation (if any) in thread-local storage.

\SetKwRepeat{Do}{do}{while}
\SetKw{Or}{or}
\SetKw{And}{and}
\SetKw{Return}{return}
\SetKw{New}{new}
\SetKw{Delete}{delete}
\SetKw{Throw}{throw}
\SetKw{Sfence}{sfence}
\SetKw{Local}{operation\_local}
\SetKwProg{Fn}{Function}{}{}
\SetKwProg{Macro}{Macro}{}{}
\SetKwProg{Class}{Class}{}{}
\SetKwProg{Struct}{Struct}{}{}

\begin{figure*}[!t]
\setlength{\columnsep}{0pt}
\removelatexerror
\begin{algorithm2e}[H]
\footnotesize
\begin{multicols}{2}
\tcp*[l]{transient structures}
$operation\_tracker$\;
$to\_be\_persisted\,$[4] \tcp*[l]{recent 4 epochs}
$to\_be\_freed\,$[4] \tcp*[l]{recent 4 epochs}
\Local $op\_epoch$\;
\tcp*[l]{persistent structures}
$curr\_epoch$\;
\Struct{Payload}{
  $type$ = \{ALLOC, UPDATE, DELETE\}\;
  $epoch$\;
  $uid$ \tcp*[l]{shared between real and anti-payloads}
}
\tcp*[l]{utility function}
\Fn{verify\,(Payload* p) : void}{
  \If{$op\_epoch$ < p$\rightarrow$$epoch$}{
    \Throw OldSeeNewException\;
  }
}
\Fn{advance\_epoch\,(\,) : void}{
  $operation\_tracker.wait\_all\,$($curr\_epoch$\,-\,1)\;
  $to\_be\_freed\,$[$curr\_epoch$\,-\,2].$free\_all$(\,)\;
  $to\_be\_persisted\,$[$curr\_epoch$\,-\,1].$persist\_all$(\,)\;
  \Sfence\;
  $curr\_epoch.atomic\_increment\,$(\,)\;
}
\Macro{BEGIN\_OP : void}{
  \Repeat{op\_epoch == curr\_epoch}{
    $op\_epoch$ = $curr\_epoch$\;
    $operation\_tracker.register\,$($tid$, $op\_epoch$)\;
  }
}
\Macro{END\_OP : void}{
  $op\_epoch$ = NULL\;
  $operation\_tracker.unregister\,$($tid$)\;
}
\Macro{PNEW\,(Type, ...) : Type*}{
  $new\_payload$ = \New $Type\,$(...)\;
  $new\_payload\rightarrow epoch$ = $op\_epoch$\;
  $new\_payload\rightarrow type$ = ALLOC\;
  \Return $new\_payload$\;
}
\Macro{PDELETE\,(Payload* p) : void}{
  $verify\,$($p$)\;
  \eIf{p.epoch == op\_epoch}{
    \eIf{p$\rightarrow$type == $ALLOC$}{
      delete($p$)\;
      \Return\;
    }{
      $p\rightarrow type$ = DELETE\;
    }
  }{
    $anti\_payload$ = \New $Payload\,$(\,)\;
    $anti\_payload\rightarrow type$ = DELETE\;
    $anti\_payload\rightarrow uid$ = $p\rightarrow uid$\;
    $to\_be\_persisted\,$[$op\_epoch$].$add\,$($anti\_payload$)\;
    $to\_be\_freed\,$[$op\_epoch$\,+\,1].$add\,$($anti\_payload$)\;
  }
  $to\_be\_freed\,$[$op\_epoch$].$add\,$($p$)\;
}
\Fn{payload.get\_x(\,) : typeof(x)}{
  $verify\,$(this)\;
  \Return this$\rightarrow x$\;
}
\Fn{payload.set\_x\,(typeof(x) y) : Payload*}{
  $verify\,$(this)\;
  \eIf{this$\rightarrow epoch$ == $op\_epoch$}{
    this$\rightarrow x$ = $y$\;
    $to\_be\_persisted\,$[$op\_epoch$].$add\,$(this)\;
    \Return this\;
  }(this$\rightarrow epoch$ < $op\_epoch$){
    $new\_payload$ = copy(this)\;
    $new\_payload\rightarrow epoch$ = $op\_epoch$\;
    $new\_payload\rightarrow type$ = UPDATE\;
    $new\_payload\rightarrow x$ = $y$\;
    $to\_be\_persisted\,$[$op\_epoch$].$add\,$($new\_payload$)\;
    $to\_be\_freed\,$[$op\_epoch$].$add\,$(this)\;
    \Return $new\_payload$\;
  }
}
\ifverbose
\BlankLine
\tcp*[l]{payload of nonblocking data structures}
\Struct{NBPayload : public Payload}{
  \tcp*[l]{transient; discarded during recovery}
  $Payload$* $anti\_payload$ = NULL\;
}
\BlankLine
\Macro{PRETIRE\,(NBPayload* p) : void}{
  $verify\,$($p$)\;
  \eIf{p$\rightarrow$epoch == op\_epoch}{
    $p\rightarrow type$ = DELETE\;
  }($p\rightarrow epoch$ < $op\_epoch$){
    $p\rightarrow anti\_payload$ = \New $Payload\,$(\,)\;
    $p\rightarrow anti\_payload\rightarrow uid$ = $p\rightarrow uid$\;
    $p\rightarrow anti\_payload\rightarrow type$ = DELETE\;
    $p\rightarrow anti\_payload\rightarrow epoch$ = $op\_epoch$\;
    $to\_be\_persisted\,$[$op\_epoch$].$add\,$($p\rightarrow anti\_payload$)\;
  }
  $to\_be\_persisted\,$[$op\_epoch$].$add\,$($p$)\;
}
\BlankLine
\Macro{PRECLAIM\,(NBPayload* p) : void}{
  $verify\,$($p$)\;
  \eIf{p$\rightarrow$anti\_payload == NULL}{
    \uIf($p$ is not retired){p$\rightarrow$type != DELETE}{
      $PDELETE\,$($p$)\;
    } \uElseIf( $p$ is retired 2 epochs ago){p$\rightarrow$epoch < op\_epoch\,$-\,1$}{
      \Delete $p$\;
    } \uElse($p$ is retired less than 2 epochs ago){
      $to\_be\_freed\,$[$op\_epoch$].$add\,$($p$)\;
      \tcp*[l]{note to\_be\_freed[p$\rightarrow$epoch] may be unsafe}
    }
  }(p has an anti-payload attached.){
    $verify\,$($p\rightarrow anti\_payload$)\;
    \eIf($p$ is from 2 epochs ago){p$\rightarrow$epoch < op\_epoch$\,-\,1$}{
      \tcp*[l]{reclamation of anti-payload need to be deferred after a fence}
      $to\_be\_freed\,$[$p\rightarrow epoch$].$add\,$($p\rightarrow anti\_payload$)\;
      \Delete $p$\;
    }($p$ is from recent epoch){
      \tcp*[l]{equivalent to PDELETE}
      $to\_be\_freed\,$[$op\_epoch$\,+\,1].$add\,$($p\rightarrow anti\_payload$)\;
      $to\_be\_freed\,$[$op\_epoch$].$add\,$($p$)\;
    }
  }
}
\else
\BlankLine
\tcp*[l]{PRETIRE and PRECLAIM omitted due to space constraints}
\fi
\end{multicols}
\end{algorithm2e}
\caption{\Montage{} Pseudocode.}
\label{fig:montage-pseudocode}
\end{figure*}

Aside from the epoch clock itself, payloads are the only data allocated
in NVM.  Each payload indicates the epoch in which it was
created and whether it is new (\code{ALLOC}), a replacement of an
existing payload (\code{UPDATE}), or an anti-payload (\code{DELETE}).
\code{ALLOC} payloads are created only in \pcode{PNEW}.
\code{UPDATE} payloads are created only in \pcode{set} (when we discover
that the block being written was created in an earlier epoch, and cannot
be updated in place).
In lock-based data structures, \code{DELETE} payloads are created
only in \pcode{PDELETE}; they live
for exactly two epochs, until the payload they are nullifying can safely
be reclaimed. With nonblocking memory management, \code{DELETE} payloads
are created only in \pcode{PRETIRE}; they live for two epochs after
the corresponding \code{PRECLAIM} call.

\subsection{Storage Management}
\label{sec:Ralloc}

Space for payloads in \Montage{} is managed by a variant of the Ralloc
persistent allocator~\cite{cai-ismm-2020}.
Ralloc is in turn based on the nonblocking allocator of 
Leite and Rocha~\cite{leite2018lrmalloc}.
Ralloc has very low overhead and excellent locality during crash-free
operation.
Almost all metadata is kept in transient memory, and 
most allocation and deallocation operations perform no write-back or
fence instructions. 

In its original form, Ralloc performs garbage collection after a crash
to identify the blocks that are currently in use; all others are
returned to the free list.  For \Montage, we modified the recovery
mechanism to simply peruse all blocks, and to keep all and only those
that are labeled as having been created at least two epochs ago.  (These
blocks will of course have been written back at some previous epoch
boundary.)  \Montage{} passes the recovered blocks (i.e., payloads) to the
application data structure, which is then responsible for rebuilding
transient structures.  To facilitate parallel recovery, the application
can request that the blocks be returned via $k$ separate iterators, to
be used by $k$ separate application threads.

\subsection{Persistence, Epoch, and Reclamation Strategy}
\label{sec:design-space}

A wide variety of concrete designs could be used to flesh out
the pseudocode of Figure~\ref{fig:montage-pseudocode}.
Natural questions include
\begin{itemize}
\item
    Should the \code{advance\_epoch} function be called periodically by
    application (worker) threads---e.g., from within the API calls---or
    should it be called by a background thread?
\item
    Once \code{advance\_epoch} has been called, should it be executed by
    a single thread, or should it be parallelized?  (The Pronto system,
    a possible inspiration, can be configured to perform all writes-back
    on the sister hyperthread of the worker that wrote the
    data~\cite{memaripour-asplos-2020}.)
\item
    Is the answer to the previous question the same for both writes-back
    and storage reclamation?  Perhaps some tasks are better performed on
    the cores where payloads or payload lists are likely to be in cache?
\item
    Should all writes-back for a given epoch be delayed until the end,
    or does it make sense to start some of them earlier?  One
    might, for example, employ a circular buffer in each worker, and
    issue writes-back one at a time, all at once, or perhaps half a
    buffer at a time, as the buffer fills.
\item
    How long should an epoch be?  Should it be measured in time,
    operations performed, or payloads written?
\end{itemize}

We performed a series of experiments to evaluate the impact on
performance of various answers to these questions.  A summary of the
results appears in Section~\ref{sec:explore}.  The short answer is
that it seems to make sense, for the data structures we have explored to
date, and on the 2-socket Intel server in our lab, to have a single
background thread that is responsible for all the work of
\code{advance\_epoch}, and to have it perform this work every
10--100\,ms.

\section{Experimental Results}
\label{sec:results}

In this section, we report experiments on queues, maps,
graphs, and memcached to evaluate \Montage's performance and generality,
and to answer the following questions:
\begin{itemize}
\item
  What is the best way to configure \Montage?
  (Sect.~\ref{sec:explore})
\item
  How does \Montage{} compare to prior special- and general-purpose
  systems, and to baseline transient structures?
  (Secs.~\ref{sec:microbenchmarks}--\ref{sec:exp_graph})
\item
  What is the cost of recovery? (Sec.~\ref{sec:exp_rec})
\end{itemize}

\subsection{Hardware and Software Platform}
\label{sec:platform}

All tests were conducted on a Linux 5.3.7 (Fedora 30) server with two
Intel Xeon Gold 6230 processors, with 20 physical cores and 40
hyperthreads in each socket---a total of 80 hyperthreads.  Threads in
all experiments were pinned first one per core on socket 0, then on the
extra hyperthreads of that socket 0, and then on the second socket.
Each socket has 6 channels of 128\,GB Optane DIMMs and 6 channels of
32\,GB DRAMs. We use ext4 to map NVM pages in direct access (DAX) mode.
\ifverbose\else
In all experiments, we allow Linux to allocate DRAM across the two
sockets of the machine according to its default policy.  The NVM is
explicitly interleaved across sockets (\code{dm-stripe} with a
2\,MB chunk size~\cite{scargall-dmstrip-2018}).
\fi
The source code of Montage is available at
https://github.com/urcs-sync/Montage.

Systems and structures tested include the following: 
\begin{description}
\item[\Montage{} --] as described in previous sections.
\item[Friedman --] the persistent lock-free queue of Friedman
  et al.~\cite{friedman-ppopp-2018}.
\item[Dal\'{\i} --] our reimplementation of the buffered durably
  linearizable hash table of Nawab et al.~\cite{nawab-disc-2017}.
\item[SOFT --] the lock-free hash table of Zuriel et
  al.~\cite{zuriel-oopsla-2019}, which persists only semantic data
  but keeps a full copy in DRAM.
\item[MOD --] persistent structures (here, queues and hash maps) as
  proposed by Haria et al.~\cite{haria-asplos-2020}, who leverage
  history-preserving trees to linearize updates with
  a single write.
\item[Pronto-Full {\rm and} Pronto-Sync --] the general-purpose system
  of Memaripour et al.~\cite{memaripour-asplos-2020}, which logs
  high-level operation descriptions that can be replayed, starting
  from a checkpoint, to recover after a crash.
  We test both the synchronously logged and (on $\leq$ 40
  threads) the ``full'' (asynchronous) version.
\item[Mnemosyne --] the general-purpose, pioneering system of
  \mbox{Volos} et al.~\cite{volos-asplos-2011}, which adds persistence
  to the TinySTM transactional memory system~\cite{riegel-disc-2006}.
\end{description}

For comparison purposes, we also include:
\begin{description}
\item[DRAM\,(T) {\rm and} NVM\,(T) --] high quality transient data
  structures built on DRAM and NVM, respectively, with no persistence
  support.
\item[\Montage\,(T) --] a variant of \Montage{} that still places payloads in
  NVM, but elides all persistence operations (no buffering,
  write-back instructions, delayed deletion, or epoch advance).
\end{description}

\subsection{Sensitivity to Design Alternatives}
\label{sec:explore}

As noted in Section~\ref{sec:design-space}, there is a very large design
space for the outline given in Figure~\ref{fig:montage-pseudocode}.
In Figure~\ref{fig:design-explore}, we show the performance of a
\Montage{} hash map across several design dimensions.
Each narrow bar indicates total throughput for 40 threads
running on a single processor of our test machine.  Each thread performs
lookup, insert, and delete operations in a 2:1:1 ratio.
The table has one million buckets.  It begins half full and remains so,
as keys are drawn from a million-element range.
In each group of bars, the epoch length varies from 1\,$\mu$s to 15s.
We used time to measure epoch length because it does not vary across
threads.

We consider three main strategies for write-back.
In \emph{DirectWB}, each update operation initiates the
write-back of its payloads immediately after completing.  This strategy
is somewhat risky on current Intel processors: the \code{clwb} (cache
line write-back) instruction actually evicts the line, raising the
possibility of an unnecessary subsequent last-level cache miss if the
line is still in the working set.  (Future processors are expected to
retain the line in shared mode.)
In \emph{PerEpoch}, each thread maintains a buffer in which
it stores the address and length of every written payload.  These are
saved for write-back at the end of the next epoch.
In the intermediate \emph{BufferedWB} strategy, each buffer holds only
1000 entries, half of which are written back when that capacity is reached.

\begin{figure}
  \includegraphics[width=0.475\textwidth]{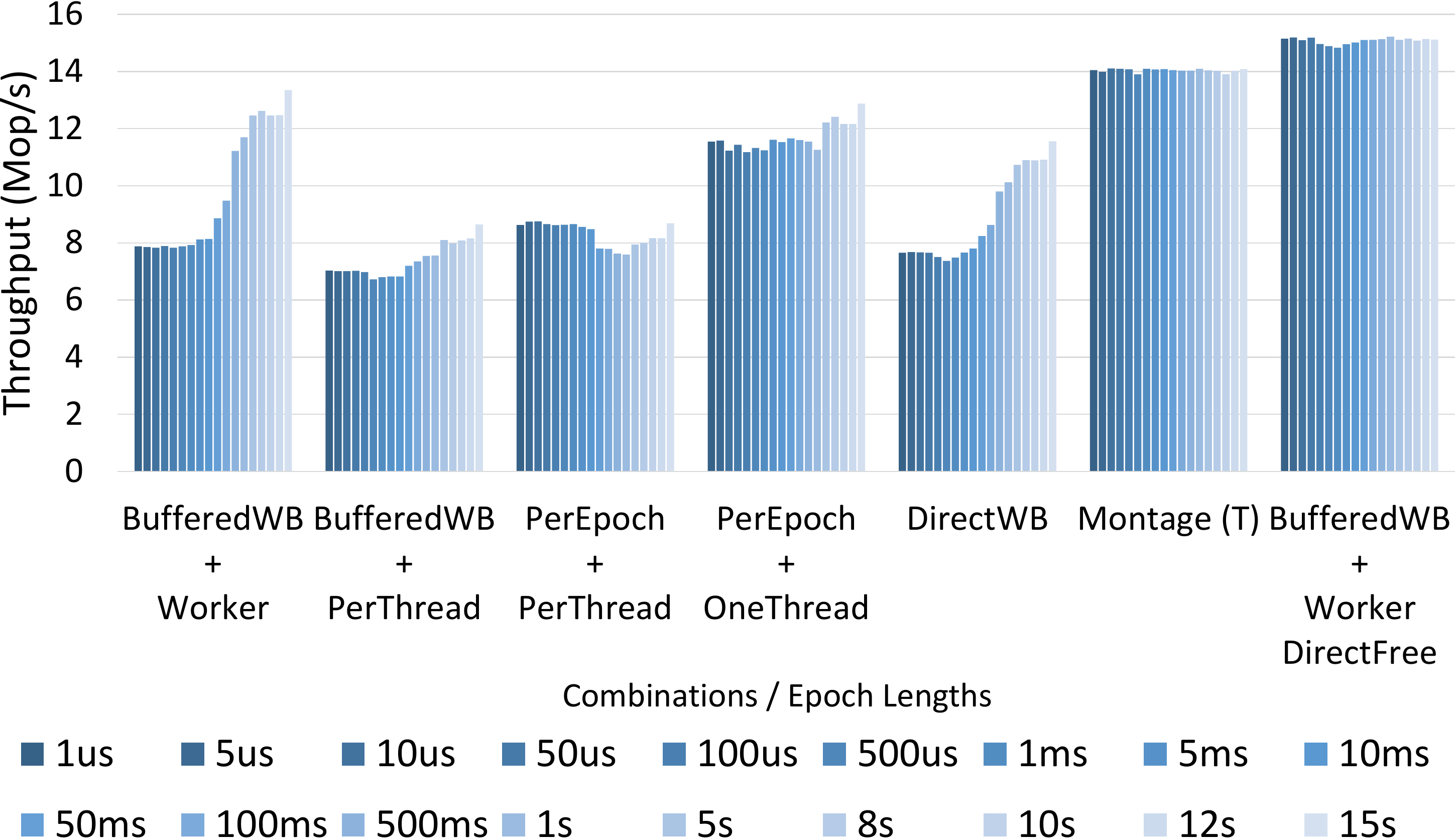}
\caption{Hash map throughput as a function of write-back strategy,
reclamation strategy, and epoch length.}
\label{fig:design-explore}
\end{figure}

For \emph{BufferedWB}, we tried letting each worker
perform its own writes-back (\emph{Worker}), or arranging for these
to happen on the sister hyperthread, which shares
the L1 cache (\emph{PerThread}).  For \emph{PerEpoch}, we tried performing
the writes-back on the sister hyperthreads (\emph{PerThread}) or in a
single background thread that empties all the buffers (\emph{OneThread}).

\newlength{\microfigwidth}
\begin{figure*}
  \centering
  \microfigwidth .4\textwidth
  \strut\hfill
  \subfloat[Concurrent Queues\label{fig:queue_30s_strip}]%
      {\includegraphics[width=1.17\microfigwidth]%
          {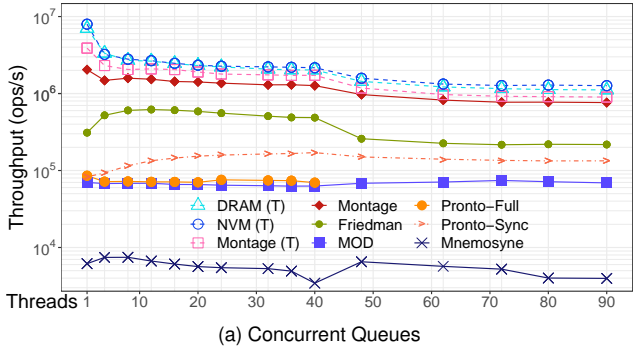}%
      }
  \hfill
  \subfloat[Concurrent Hash Tables---0:1:1
  get:insert:remove\label{fig:hashtables_g0i50r50_strip}]%
      {\includegraphics[width=1.17\microfigwidth]%
          {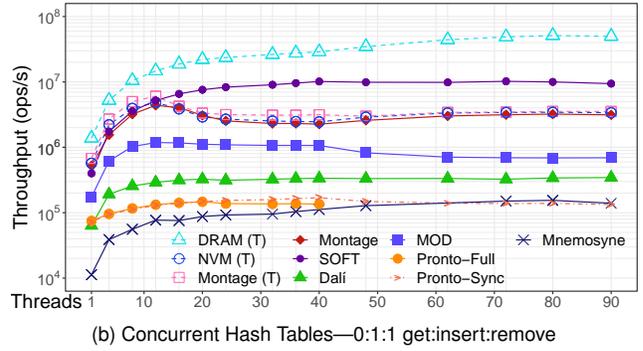}%
      }
  \hfill\strut\\
  \vspace{1.5ex}
  \strut\hfill
  \subfloat[Concurrent Hash Tables---2:1:1 get:insert:remove
  \label{fig:hashtables_g50i25r25_strip}]%
      {\includegraphics[width=1.17\microfigwidth]%
          {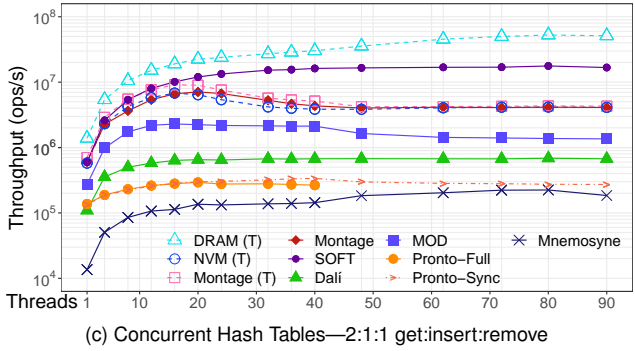}%
      }
  \hfill
  \subfloat[Concurrent Hash Tables---18:1:1
  get:insert:remove\label{fig:hashtables_g90i5r5_strip}]%
      {\includegraphics[width=1.17\microfigwidth]%
          {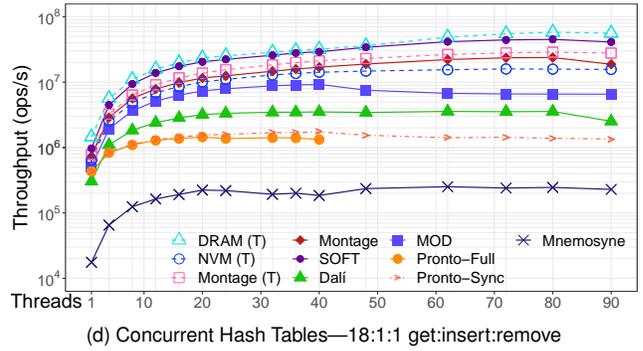}%
      }
  \hfill\strut
  \caption{Throughput of Concurrent Data
      Structures\ifverbose { on Interleaved NVM}\fi.}
  \label{fig:concurrent-interleaved}
\end{figure*}

\ifverbose
\begin{figure*}
  \centering
  \microfigwidth .4\textwidth
  \strut\hfill
  \subfloat[Concurrent Hash Tables---2:1:1
  get:insert:remove\label{fig:hashtables_g50i25r25}]
      {\includegraphics[width=1.17\microfigwidth]%
          {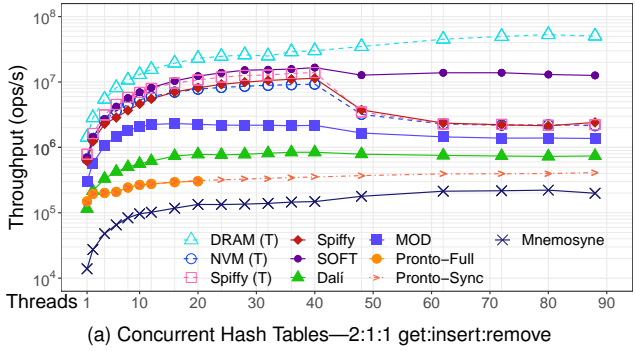}%
      }
  \hfill
  \subfloat[Concurrent Hash Tables---18:1:1
  get:insert:remove\label{fig:hashtables_g90i5r5}]
      {\includegraphics[width=1.17\microfigwidth]%
          {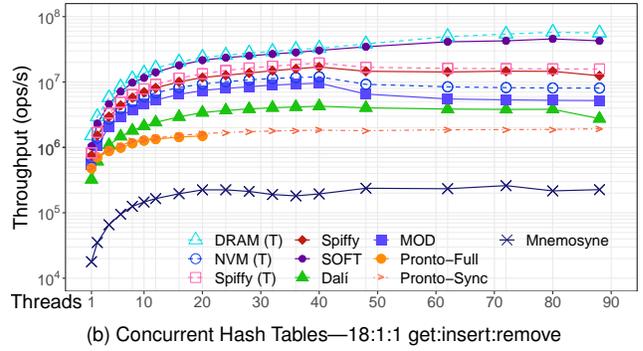}%
      }
  \hfill\strut
  \caption{Throughput of Concurrent Hash Tables on First Socket's NVM.}
  \label{fig:concurrent-socket1}
\end{figure*}
\fi

Memory reclamation always has to be delayed for two epochs, as explained
in Section~\ref{sec:epochs}.
We experimented with strategies in which payloads were always reclaimed
on the core on which \code{PDELETE} was called, but any benefits of
locality were outweighed by the cost of global synchronization.
Fortunately, even though it relies on thread-local free lists, Ralloc is
able to rebalance these very efficiently; it imposes very little penalty
for deallocating all blocks in the epoch-advancing thread.
For comparison purposes, we also ran experiments \Montage\,(T) and
with an (unsafe) variant of \emph{BufferedWB} that does not delay
reclamation.

As shown in Figure~\ref{fig:design-explore},
\emph{OneThread} generally beats \emph{Worker} and
\emph{PerThread}, particularly given that the latter requires so
many extra pipeline resources.
\emph{PerEpoch} generally beats \emph{BufferedWB};
this may be due to the impact of working set eviction or to the fact
that, by the end of the epoch, many payloads have already been written
back implicitly.
Overall, \emph{PerEpoch}+\lb{}\emph{OneThread} is clearly the
winning strategy.
A single thread turns out to have ample bandwidth\ifverbose { for the
task}\fi, even given the cache misses it suffers while perusing
per-thread buffers.

Note that most groups of bars slope upward with very long epochs.
Given our workload, when epochs are measured in seconds, most
existing payloads are deleted early in the epoch; subsequent
deletions tend to be ``same-epoch'' payloads, and can be reclaimed
without delay.
This is verified both by the fact that ``\Montage\,(T)'' has no curve to
its bar group and by a follow-up experiment showing that the rise in
performance comes later when the key range is larger.
The closeness in height of the final two bar groups suggests that most of
\Montage's overhead is due to memory management, not to writes-back.
In \emph{PerThread} and \emph{DirectWB}, the benefit of very short
epochs, relative to medium-length, presumably reflects the fact that
buffers
are likely to remain in the L1 or L2 cache if epochs are extremely short.

In other experiments (not shown), we varied several additional
parameters, including buffer sizes and emptying fractions for
\emph{BufferedWB}, and synchronization mechanisms for \emph{PerThread}.
None of these produced significantly different
results.  For the remaining experiments in this paper, \Montage{} is
configured with per-thread, whole-epoch buffers, an epoch length of
50\,ms, and a single background thread responsible for epoch advance,
write-back, and memory reclamation.

\subsection{Performance Relative to Competing Systems}
\label{sec:microbenchmarks}

We have benchmarked \Montage{} against the data structures and systems
listed in Section~\ref{sec:platform}, using queue and hash map
structures.  Results appear in Figure~\ref{fig:concurrent-interleaved}.
The \Montage{} queue employs a single lock; the \Montage{} hash map has
a lock per bucket.
In work not reported here, we have developed 
nonblocking linked lists, queues, and maps, and various tree-based maps.
In Section~\ref{sec:exp_graph} we describe the
implementation of a general graph, with operations to add, remove, and
update vertices and edges.

The queue microbenchmark runs a 1:1 enqueue:dequeue workload. For the map
we run three different workloads---write-\lb{}dominant (0:1:1
get:\lb{}insert:\lb{}remove), read-write (2:1:1
get:\lb{}insert:\lb{}remove), and read-dominant (18:1:1
get:\lb{}insert:\lb{}remove),\footnote{%
  SOFT does not support atomic \code{replace} or \code{update},
  so the benchmark does not include
  them.  In separate experiments (not shown here), we confirmed that
  \code{update} does not change the curves for other
  algorithms.}
with 0.5 million elements preloaded in 1 million hash buckets. The
value size in queues and maps is 1\,KB. The key in maps ranges from 1
to 1 million, converted to string and padded to 32\,B. The benchmarks
run between 1 and 90 threads.  Each workload runs for 30 seconds.
Results were averaged over 3 trials for each data point.

As shown in Figure~\ref{fig:concurrent-interleaved}, \Montage{} data
structures generally perform as fast as transient structures
running on NVM (they may even outperform NVM\,(T),
given transient indexing in DRAM).  Compared to DRAM\,(T), \Montage{}
adds as little as 30\% overhead in queues, and less than
an order of magnitude on the highly concurrent
hash table (less than
70\% at low thread counts).
With the exception of SOFT, \Montage{} also outperforms all tested
persistence systems on all four workloads.
The \Montage{} queue provides more than $2\times$ the throughput of
Friedman et al.'s
\ifverbose special-purpose \fi
queue, and is more than an order of magnitude faster
than the MOD, Pronto, and Mnemosyne queues.
For hash maps, \Montage{} runs more than $2\times$ faster
than MOD,\footnote{%
  We implemented a per-bucket locking hash table using MOD linked
  lists. This hash table has lower time complexity and better
  scalability than the compressed hash-array mapped prefix-tree
  in the original MOD paper~\cite{haria-asplos-2020}.}
$4\times$--$30\times$ faster than Dal\'{\i} and Pronto,
and nearly two orders of magnitude faster than Mnemosyne on the
write-dominant and
read-write workloads. On the read-dominant workload, \Montage{}
still has around $2\times$ the throughput of MOD at most thread
counts.

The exceptional case is SOFT, which maintains---and reads
from---a full copy of the data in DRAM\@. Nonetheless, \Montage{} is
close or outperforms SOFT at low thread counts and on the read-dominant
workload, and still achieves more than $1/3$ the throughput of SOFT at
high thread counts.  The downside is that by keeping a full
copy in DRAM, SOFT loses the ability to take full advantage of the
10$\times$ additional capacity of NVM\@.
Interestingly, \Montage{} and NVM\,(T) stop scaling at 12 and
20 threads on the write-dominant and read-write workloads, which may
reflect multithreading contention in NVM's write combining buffer and write pending queues~\cite{yang-fast-2020}.

\ifverbose
\subsubsection*{Memory Configuration}

In all our experiments, we allow Linux to allocate DRAM across the two
sockets of the machine according to its default policy.  NVM, however,
must be manually configured.  In the experiments of
Figure~\ref{fig:concurrent-interleaved}, we interleaved it across
the sockets (\code{dm-stripe} with a 2\,MB chunk
size)~\cite{scargall-dmstrip-2018}.  In separate experiments, we
configured the sockets as separate domains, and placed all \Montage's
payloads on socket~0.  With threads pinned as before (numbers 40 and up
on socket 1), results for the read-write and read-dominant hash maps
are shown in Figure~\ref{fig:concurrent-socket1}.
With 50\% writes, \Montage{} scales better before 40 threads on the
first socket's NVM than on interleaved NVM, but drops
heavily once threads cross sockets.  NVM\,(T) is also affected.
A possible explanation to the drop: frequent, slow remote writes
occupy NVM bandwidth and block other accesses~\cite{yang-fast-2020}.
This cost is amortized and mitigated by the doubled number of DIMMs if
NVM is interleaved across the sockets.
Consistent to the explanation, the read-dominant workload
behaves more smoothly while crossing sockets. The graph for the write-dominant
workload (not shown here) is similar to
Figure~\ref{fig:hashtables_g50i25r25}.
Queues
show no significant difference from the interleaved case.

\fi

\subsubsection*{Payload Size}
\label{sec:exp_single}

\begin{figure*}
  \centering
  \microfigwidth .4\textwidth
  \strut\hfill
  \subfloat[Single-threaded Queues\label{fig:queue_30s_size}]%
      {\includegraphics[width=1.17\microfigwidth]%
          {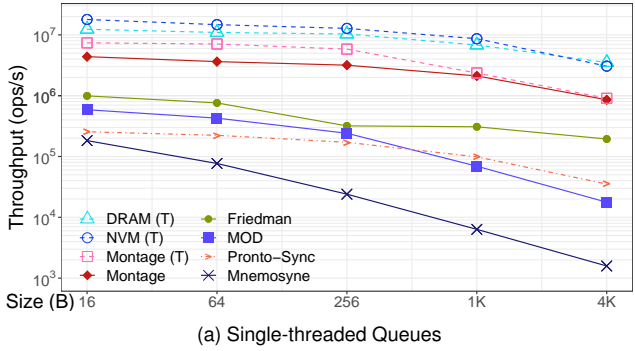}%
      }
  \hfill
  \subfloat[Single-threaded Hash Tables---2:1:1
  get:insert:remove\label{fig:hashtables_g0i50r50_size}]
      {\includegraphics[width=1.17\microfigwidth]%
          {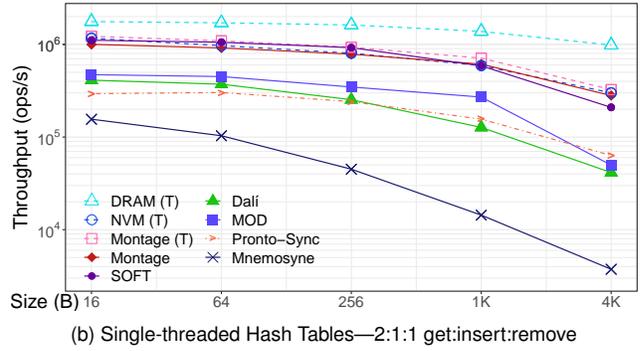}%
      }
  \hfill\strut
  \caption{Throughput of Single-threaded Data Structures.}
  \label{fig:single}
\end{figure*}

To assess the impact of operation footprint on relative performance,
we repeated our queue and read-write hash table experiments with
a single thread but with payloads varying from 
16\,B to 4\,KB.
Results appear in Figure~\ref{fig:single}.

As in the previous section, \Montage{} outperforms all competitors but
SOFT\@.  This is also the case on write-dominant and read-dominant
workloads (not shown). Interestingly, in the write-heavy
case, the SOFT curve drops more sharply than the \Montage{} curve, and
crosses over at just 256\,B: the overhead of (strict) durable
linearizability increases with larger payloads, while \Montage{}
benefits more from its buffering.

\subsection{Hash Map Validation Using memcached}
\label{sec:memcached}

To confirm the data structure results in a more realistic application,
we use \Montage{} to persist a variant of \emph{memcached} developed by
Kjell\-qvist et al.~\cite{kjellqvist-icpp-2020}.  This variant links
directly to a multithreaded client application, dispensing with the
usual socket-based communication.  It was appealing for our experiments
because the authors had already converted it to use Ralloc instead of
the benchmark's own custom allocator.

\begin{figure*}
\centering
\begin{minipage}[t]{.65\textwidth}
  \captionsetup{justification=centering}
  \includegraphics[width=.48\textwidth]{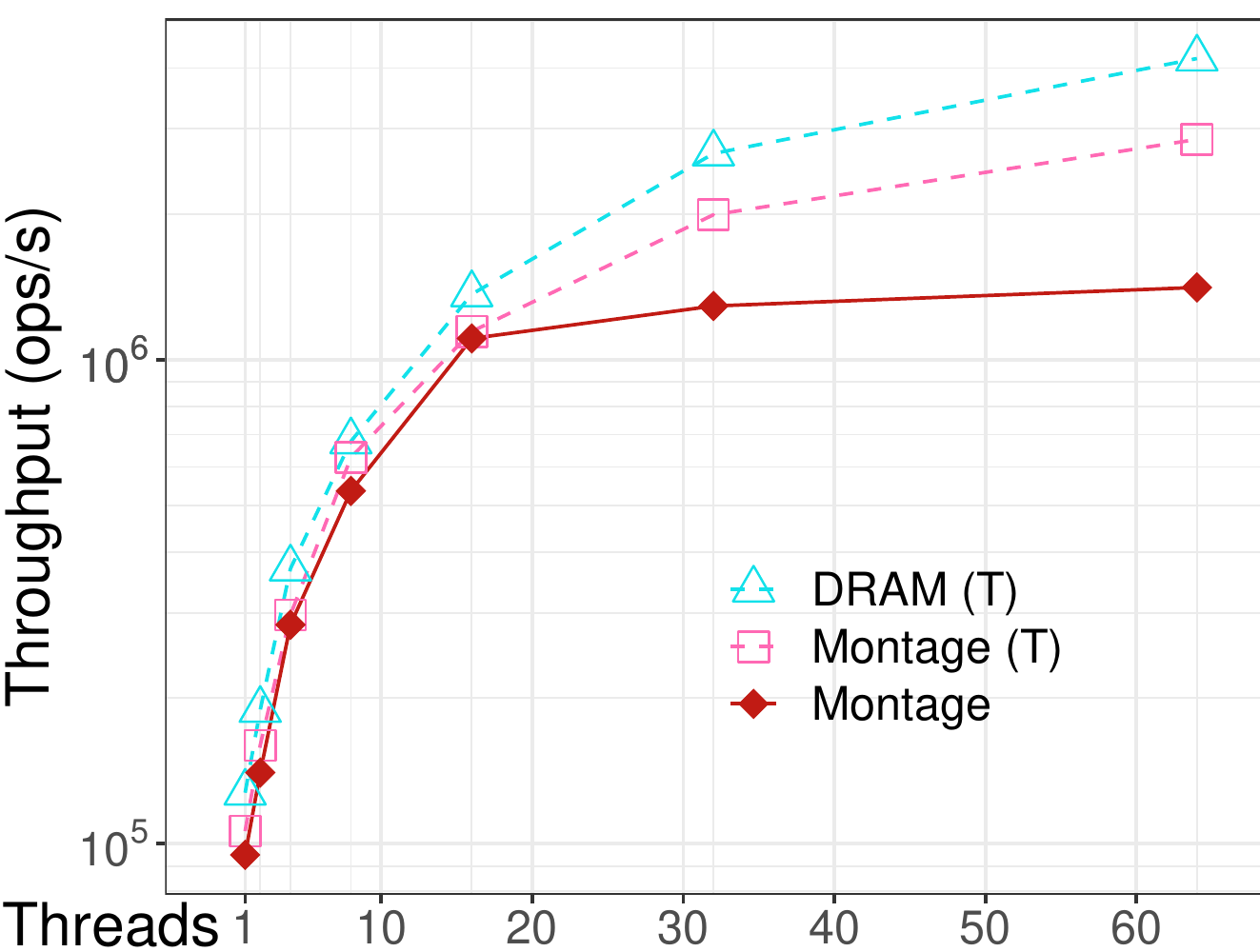}
  \hfill
  \includegraphics[width=.48\textwidth]{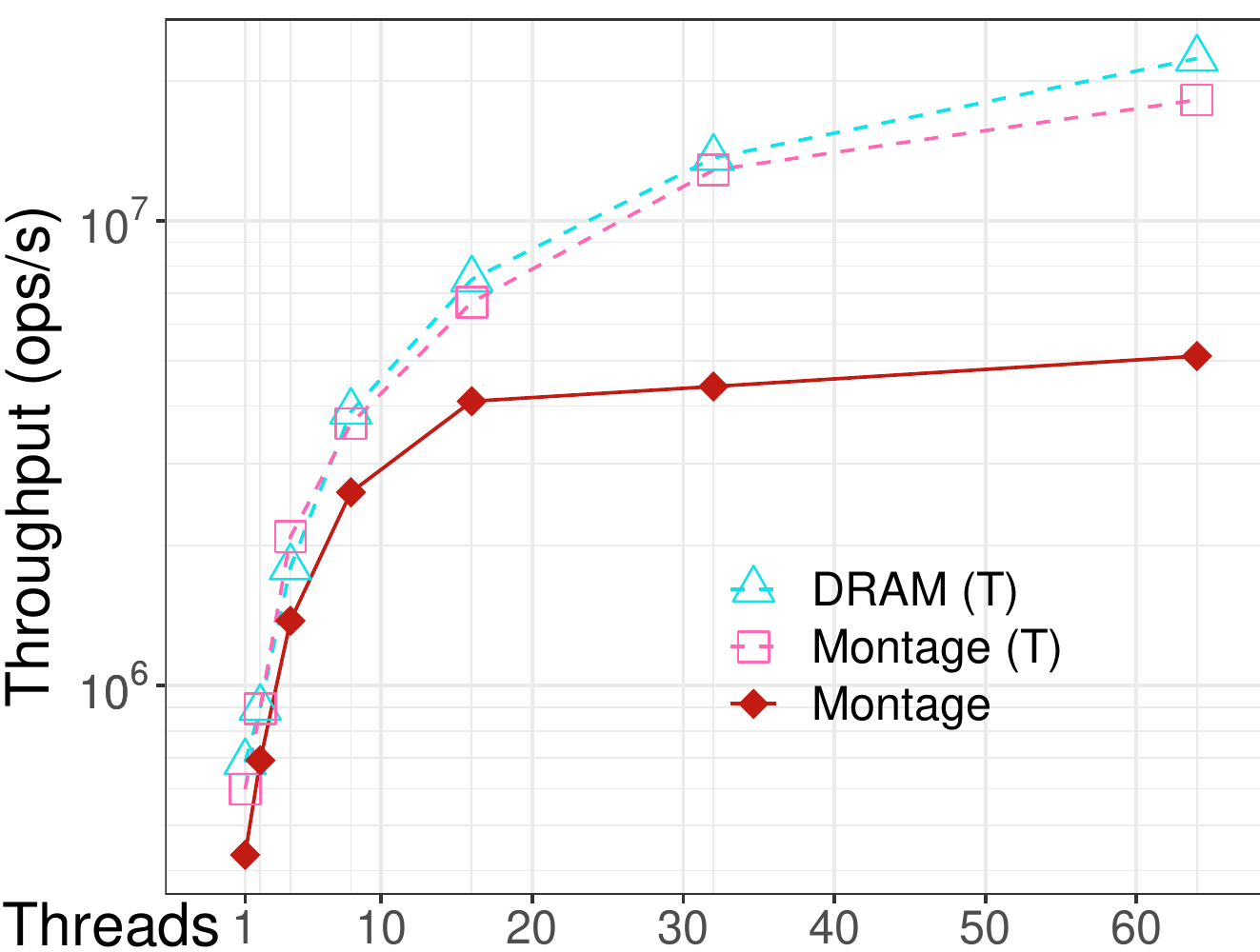}
  \caption{Graph Microbenchmark Throughput
  (note difference in y-axis scales)\\
      AddEdge:GetEdge:RemoveEdge:ClearVertex =
      1:1:1:1 (left), 33:33:33:1 (right).}
  \label{fig:graph_fair_vs_unfair}
\end{minipage}
\hfill
\begin{minipage}[t]{.33\textwidth}
  \includegraphics[width=\textwidth, height=1.65in]{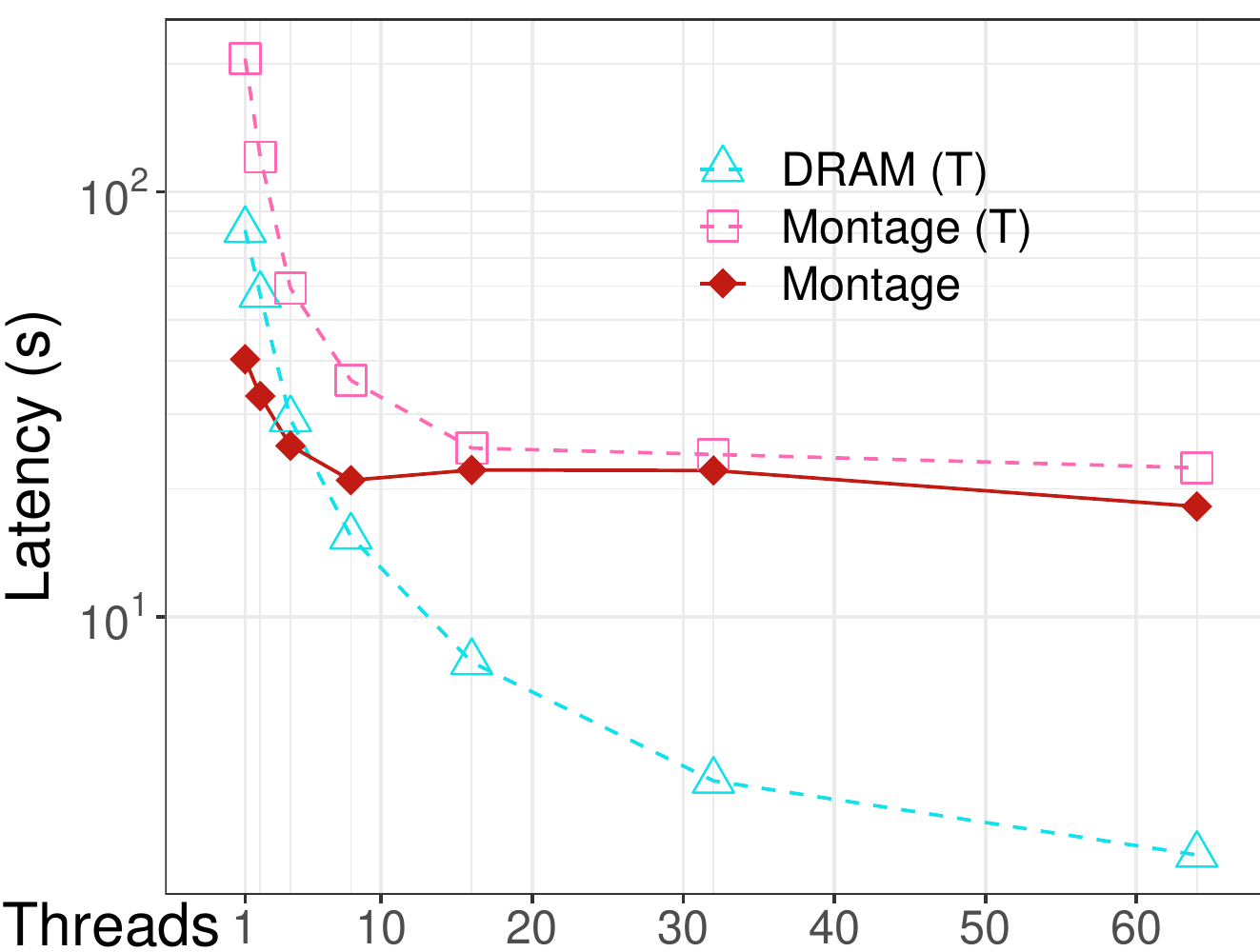}
  \caption{Time to Rebuild Orkut Graph.}
  \label{fig:graph_Orkut}
  \end{minipage}
\end{figure*}

\begin{figure}
  \includegraphics[width=.96\columnwidth]%
      {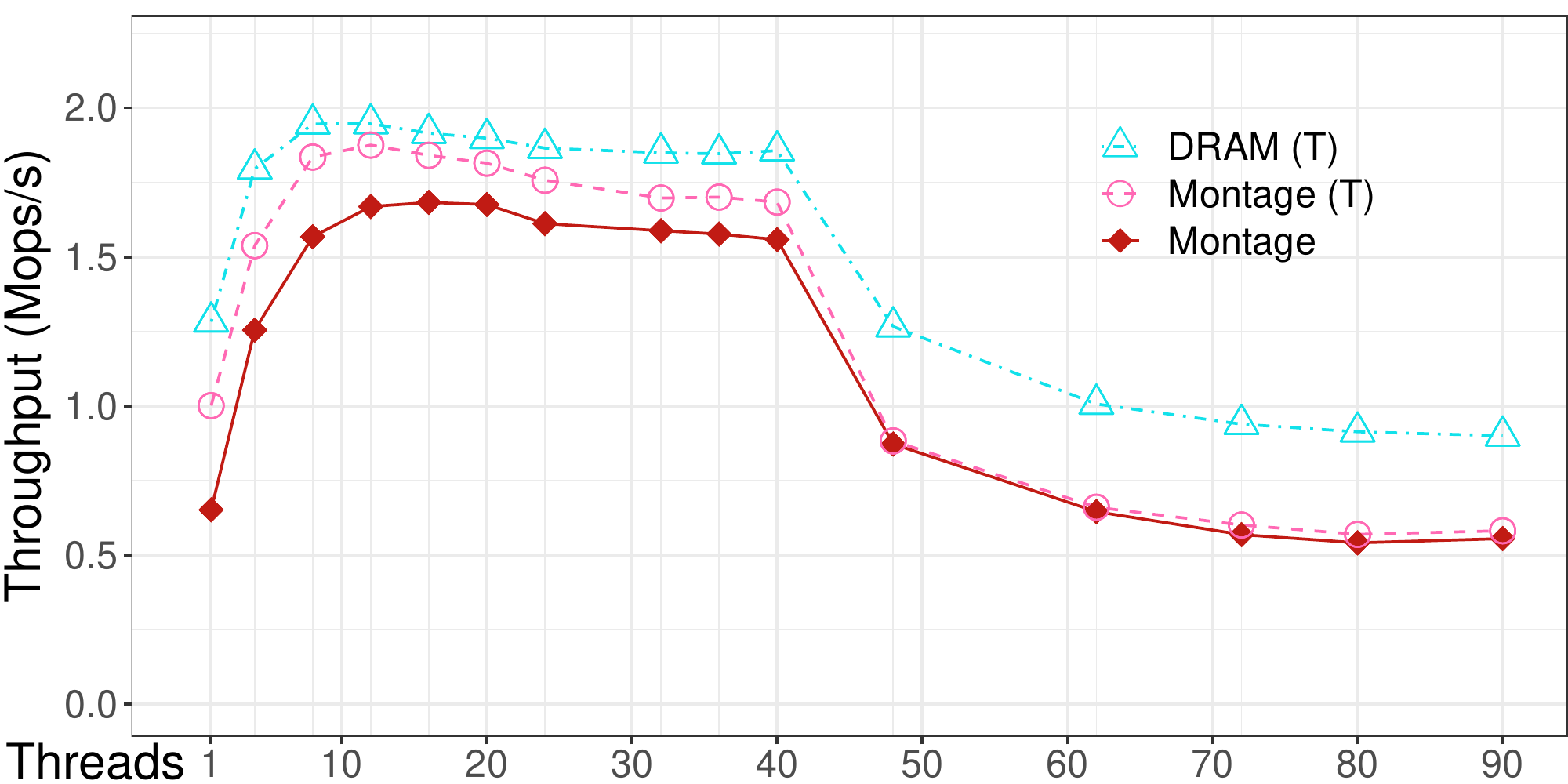}
  \caption{Throughput of memcached on YCSB-A.}
  \label{fig:threadcached_ycsba}
\end{figure}
Figure~\ref{fig:threadcached_ycsba} compares
\ifverbose the performance of \fi
the resulting (fully persistent, recoverable) version of memcached to the
transient version of Kjell\-qvist et al., placing items on DRAM or on NVM\@.
Here the YCSB-A workload~\cite{cooper-ycsb-2010}, running on 1 million
records, consists of 2.5 million read and 2.5 million update operations,
evenly distributed to each thread.  Data points reflect the average of
three trials.
\ifverbose
The results shown are for interleaved NVM; results with
all data on socket 0 were very similar.
\fi
As in the microbenchmark results, \Montage{} performs within a small
constant factor of purely transient structures.

\subsection{Generality in Graphs}
\label{sec:exp_graph}

As noted in Section~\ref{sec:api}, it is important in \Montage{} to
avoid long chains of pointers.  To build a persistent graph, we
therefore arrange for edge payloads to point to their endpoint vertices,
but not vice versa.  A more conventional representation of
connectivity is then kept in a transient structure, with the (typically
large) edge and vertex attributes appearing only in NVM payloads.
We regard the feasibility of building a graph in \Montage{} as a strong
indication of the system's generality.

Using our \Montage{} graphs, we compare performance
(as in the memcached experiments) to transient graphs placed in DRAM or
NVM\@.  Figure~\ref{fig:graph_fair_vs_unfair} shows results for a
microbenchmark that performs a mix of
\code{AddEdge}, \code{RemoveEdge}, \code{GetEdge}, and
\code{ClearVertex} operations. The first three of these take two vertex
IDs as source and destination; the fourth deletes a vertex and removes
all its in- and out-edges.

To initialize the graph, for each vertex $v
\in V$, we sample $n \sim N(\mu=10,\sigma=3)$ and create $n$ edges
$(v,v')$, with $v' \sim U(0,|V|-1)$.
By maintaining a small average vertex degree, this approach
avoids very large atomic operations.  Indeed, when
\code{ClearVertex} is called less often (right half of
Fig.~\ref{fig:graph_fair_vs_unfair}), overall throughput is higher
but the constant, per-operation component of \Montage's overhead has a
relatively higher impact, and the gap between \Montage{} and the transient
structures is larger.

\subsection{Recovery Time}
\label{sec:exp_rec}

To assess the overhead of recovery in \Montage, we measured both hash
map and graph examples.
In the hash map case, we initialized the table with 2--64 million
1\,KB elements, leading to a total payload size of 1--32\,GB.
With 1 recovery thread, \Montage{} recovers the 1\,GB data set in 0.7\,s
and the 32\,GB data set in 41.9\,s.
With 8 recovery threads, it takes 0.4 and 13.8\,s, respectively.
Improving the scalability of recovery is a topic for future work.

As a second example, we compared the recovery time of a large \Montage{}
graph
(the SNAP Orkut dataset~\cite{leskovec2016snap,
snapOrkut}, a social network of $\sim$3\,M vertices and 117\,M
edges)
to the time required to construct the same graph from a file of
adjacency lists.
The dataset is partitioned into many files, each of which uses
a custom binary format that
eliminates the need for string manipulation.
\Montage{} recovery is handled much like the parallel I/O: vertices and
edges are added back to the graph in parallel. Because recovery is an
internal graph operation, however, much of the locking can be elided by
cyclically distributing vertices among threads, each of which
creates a set of edge buffers to pass to other threads.
Figure~\ref{fig:graph_Orkut} demonstrates that recovery is even faster than
reconstruction on DRAM at low thread counts, and takes roughly as long as
reconstruction on NVM after 16 threads.
Crucially, the \Montage{} implementation has the advantage of supporting
small changes to the graph without the need orchestrate persistence via
file I/O.

\section{Conclusions}
\label{sec:conclusions}
We have introduced \Montage, the first general-purpose system for
buffered durable linearizability of persistent data structures.
In comparison to systems that are (strictly) durably linearizable,
\Montage{} moves write-back and, crucially, fencing off the critical path
of the application.
\Montage{} is built on top of the Ralloc nonblocking persistent
allocator~\cite{cai-ismm-2020},
which avoids both writes-back and fences in most allocation and
deallocation operations.
Nonblocking data structures remain nonblocking when implemented on top
of \Montage, though preempted threads can stall the advance of the
persistence frontier.

Experiments with multiple data structures---including memcached's hash
table---confirm that \Montage{} dramatically
outperforms prior general-purpose systems for persistence.
It also outperforms---or is competitive with---existing special-purpose
persistent data structures.
In many cases, in fact, it rivals the performance of traditional
transient data structures configured to use NVM instead of DRAM\@.  This
is generally the best performance one could hope for.

\ifverbose
In most of our experiments, \Montage{} was used to persist data
structures used by a single multithreaded application at a time.  Our
experiments with memcached, however, leveraged code developed for the
Hodor project~\cite{hedayati-atc-2019}, which allows a data structure to
be shared safely among mutually untrusting applications, with
independent failure modes.  We speculate that such structures may
provide a particularly attractive alternative to files for shared
abstractions in a future filled with nonvolatile main memory.
\fi

\bibliographystyle{abbrv}
\bibliography{main}

\end{document}